\DeclarePairedDelimiter\floor{\lfloor}{\rfloor}
\DeclarePairedDelimiter{\ceil}{\lceil}{\rceil}
\def\p{\hat{p}}
\def\JoinCom{\mbox{\sf JoinComm}}
\def\Constr{\mbox{\sf Constr}}
\def\lt{\left}
\def\rt{\right}
\def\inline#1:{\par\vskip 7pt\noindent{\bf #1:}\hskip 10pt}
\title{Time-Optimal and Energy-Efficient Deterministic Consensus}
\author{Shachar Meir}
{Weizmann Institute of Science, Rehovot, Israel \and \url{https://shacharmeir007.github.io}}
{shachar.meir@weizmann.ac.il}
{0009-0003-5007-047X}
{}
\author{Hugo Mirault}
{Augusta University, Augusta, Georgia, USA}
{hmirault@augusta.edu}
{0009-0008-5885-8372}
{}
\author{David Peleg}
{Weizmann Institute of Science, Rehovot, Israel \and \url{https://www.weizmann.ac.il/math/peleg/}}
{david.peleg@weizmann.ac.il}
{0000-0003-1590-0506}
{}
\author{Peter Robinson}
{Augusta University, Augusta, Georgia, USA}
{perobinson@augusta.edu}
{0000-0002-7442-7002}
{}
\authorrunning{S. Meir \and H. Mirault \and D. Peleg \and P. Robinson}
\keywords{Distributed computing, Crash faults, Consensus, Energy complexity, Sleeping
model}
\begin{document}

\maketitle
\begin{abstract}
We study fault-tolerant consensus in a variant of the synchronous message passing model, where, in each round, every node can choose to be awake or asleep.
This is known as the sleeping model (Chatterjee, Gmyr, Pandurangan PODC 2020) and defines the awake complexity (also called \emph{energy complexity}), which measures the maximum number of rounds that any node is awake throughout the execution.
Only awake nodes can send and receive messages in a given round and all messages sent to sleeping nodes are lost.
We present new deterministic consensus algorithms that tolerate up to $f<n$ crash failures, where $n$ is the number of nodes.
Our algorithms match the optimal time complexity lower bound of $f+1$ rounds.
For multi-value consensus, where the input values are chosen from some possibly large set, we achieve an energy complexity of $\mathcal{O}(\lceil f^2 / n \rceil)$ rounds, whereas for binary consensus, we show an algorithm to achieve $\mathcal{O}(\lceil f / \sqrt{n} \rceil)$ energy complexity.
\end{abstract}

\bigskip
\noindent
{\bf Eligibility:}
The paper is eligible for the best student paper award.

\bigskip

\section{Introduction} \label{sec:intro}
Reaching agreement in a network of distributed processes in the presence of faults is a fundamental problem in distributed computing with numerous applications, including blockchain networks and synchronization problems; e.g., see \cite{DBLP:journals/csur/00310J23,attiyawelch,Lyn96}.
In the consensus problem~\cite{pease1980reaching,lamport1982byzantine}, there is a set of players (or nodes) that communicate with each other by sending point-to-point messages. Each player holds an initial value, and the goal is that all of them decide on the same value eventually.
 
While the problem can be solved in just a single round of communication assuming that all players operate correctly, achieving consensus becomes significantly more challenging when taking into account that some of the players may unexpectedly fail throughout the execution.
The seminal result of \cite{FLP85} showed that consensus is impossible with a deterministic algorithm in asynchronous networks, when even just a single player may crash.
Moreover, \cite{dolevstrong} proved that $f+1$ rounds are indeed necessary in the synchronous setting, when considering up to $f$ crash failures, and a simple algorithm matches this lower bound on the time complexity. 

Even though the time complexity of deterministic consensus has been resolved, the actual number of rounds that a player needs to actively participate in the protocol is less clear. 
This motivates studying consensus in the sleeping model, where, in every round, each player chooses to either be asleep or awake, and the \emph{energy complexity} of the algorithm is defined as the maximum number of awake rounds over all players in the worst case.
The sleeping model was first introduced in \cite{chatterjee2020sleeping} for studying the (average) energy complexity\footnote{The terms ``energy complexity'' and ``awake complexity'' are used interchangeably in the literature. In this work, we use energy complexity.} of computing a maximal independent set and has since received significant attention recently; e.g., see \cite{DBLP:conf/podc/DufoulonMP23,DBLP:conf/podc/0001T24,DBLP:conf/podc/0001P23,DBLP:conf/icdcs/HouraniPR22,DBLP:conf/wdag/BarenboimM21}. 
The sleeping model can be interpreted as a simplified variant of the class of models studied in \cite{DBLP:conf/podc/ChangDHHLP18,DBLP:conf/podc/ChangDHP20}, where the main focus is on wireless networks.

In \cite{pass2017sleepy}, Pass and Shi introduced the \emph{sleepy model of consensus} in the context of Blockchain networks with Byzantine players, where the players exhibit sporadic participation: A player can be either online (alert) or offline (asleep),
and it is possible that only a fraction of the players participate in the computation at any given point in time. 
A player's current participation status may change as determined by the adversary.
Moreover, once an asleep player becomes online, it is guaranteed to receive all pending messages that were addressed to it while sleeping.
The sleepy model has become a standard model for studying blockchain protocols under dynamic participation and has been studied extensively since its inception; e.g., see \cite{efron2025fully,malkhi2023towards,momose2022constant,neu2025limits,d2025tob,d2024asynchrony,efron2025optimal}. For instance, a constant-latency protocol in this setting is given in \cite{momose2022constant}, and the constant factor was further reduced in \cite{malkhi2023towards}. A variant of the sleepy model where the set of players is infinite is considered in \cite{losa2023consensus}. 
A generalized version of the sleepy model was studied in \cite{d2024recent}.
Note that the assumptions of the sleepy model are different from the sleeping model of~\cite{chatterjee2020sleeping} considered here, which assumes that the sleeping schedule of the players is entirely under the control of the algorithm and all messages sent to a currently sleeping player are lost.

\subsection{Our Contributions}
We establish the first energy efficient algorithms for consensus under crash faults in the sleeping model, while still obtaining optimal time complexity.
In more detail, we present the following new deterministic algorithms:
\begin{itemize}
    \item Multi-value consensus can be solved with an energy complexity of $\mathcal{O}(\lceil\frac{f^2}{n}\rceil)$ rounds, an optimal time complexity of $f+1$, while sending at most $\mathcal{O}(f^3+nf)$ messages.
    \item Binary consensus can be solved with an energy complexity of $\mathcal{O}(\lceil \frac{f}{\sqrt{n}}\rceil)$ rounds, an optimal time complexity of $f+1$, and with a message complexity of $\mathcal{O}(nf)$.
\end{itemize}
We point out that none of the prior works in the sleeping model considers faulty nodes, and thus our results also shed some light on the impact of faults on the energy complexity.

\section{Model and Preliminaries}
We consider the standard synchronous message-passing
model of distributed computing, where $n$ players $p_0,\ldots,p_{n-1}$ are connected in a clique network.
Each player can choose to either \emph{sleep} or be \emph{awake} in each round, which is known as the \emph{sleeping model}.
If a player $p$ is asleep in some round, then any message sent to $p$ in this round is lost; in particular, $p$ does \emph{not} receive these messages upon its next wake-up.
In this work, we extend the above model with \emph{crash faults}, which means that an adversary can crash up to $f$ players throughout the execution, for some parameter $f<n$.
As we only consider deterministic algorithms, we assume that the adversary is omniscient. 
If a player $p$ \emph{crashes} in round $r$, then only an arbitrary subset of messages sent by $p$ are guaranteed to be delivered.
Moreover, $p$ does not perform any computation in any future round after crashing.
Any player who does not crash in rounds $1,\dots,r-1$ is said to be \emph{alive in round $r$}
If a player does not crash until it has decided, we say that it is \emph{non-faulty}.

It is possible in the sleeping model that there are certain rounds in which no players are awake, e.g., because the adversary crashed the few player that tried to wake up. 
Consequently, any algorithm that solves consensus in optimal time needs to implement an awake schedule that
ensures sufficient progress towards a common decision in each round.

\inline Complexity measures:
We quantify the performance of our algorithms in terms of the \emph{message complexity}, i.e., the total number of messages sent in the worst case, and the \emph{time complexity}, which captures the worst-case number of rounds until all players have terminated.
In the sleeping model, we consider an additional measure:
\begin{itemize} 
\item the \emph{(worst case) energy complexity} is the maximum number of rounds (taken over all players) that a player is awake in the worst case execution;
\end{itemize}

\inline Consensus:
Every player starts with an input value that is assigned from a set $D$ of integers. In Section~\ref{sec:multi}, we assume that $D$ is of size at most polynomial in $n$, whereas in Section~\ref{sec:binary}, we restrict $D$ to be $\{0,1\}$.
An algorithm solves consensus if, in any execution and for any input assignment, the following properties hold:
\begin{itemize}
\item \textit{Termination}: Every non-faulty player eventually decides on some value. 
\item \textit{Agreement}: If some player decides on $v$ in an execution $\alpha$, then every other player who decides must also decide on $v$.
\item \textit{Validity}: Any decision value must be the input value of some (possibly faulty) player. 
\end{itemize}

\section{Multi-value Consensus} \label{sec:multi}
In this section, we give an algorithm for the setting, where the input values of the players are chosen from an arbitrary subset of the integers.
We present the detailed pseudo-code in Algorithm~\ref{alg:multi}.

\subsection{Description of the Algorithm} \label{sec:desc}
At the beginning of round $1$, we fix $f$ static committees $C_1,\dots,C_f$, each containing $f+1$ members. The players use their IDs to determine which committee to join.
That is, each player locally executes a function
\JoinCom($x,y,n$), which 
constructs $x$ committees of size $y$, using all $n$ players, such that each player's ID is in at most $\lceil\frac{f(f+1)}{n}\rceil$ committees. Note that the procedure's third parameter, i.e, the number of players used to construct the committees, will be utilized carefully in Section \ref{sec:binary}. For now, the reader should focus on the choice of the first two parameters.

\begin{algorithm}
\small
\Parameter{$a:$ number of committees\\$b:$ size of committees\\ $c:$ number of players to use (starting from the first one) }
$C_1,\cdots,C_a \gets \emptyset$\\
\ForEach{$1\leq i\leq a \cdot b$}
{$j \gets i \mod c$\\
$k \gets \ceil{i/b}$\\
Add $p_j$ in set $C_k$}
Return  $C_1,\cdots,C_a$
\caption{Procedure \JoinCom$(a,b,c)$ assigns the first $c$ players to $a$ committees, each of size $b$.}
\label{alg:joincommittees}
\end{algorithm}

In Phase~1, which consists of round $1$, all players awake simultaneously and send their initial values to committee $C_1$. Each player $p \in C_1$ keeps track of the maximum received value in a variable $Y$ that is initialized to $p$'s input.
Phase~2 consists of $f-1$ rounds, where all members of committee $C_r$ are awake in rounds $r$ and $r+1$. In round $r$, they receive the values sent by committee $C_{r-1}$ and each member computes its new maximum value locally. In round $r+1$, each player in $C_r$ forwards this value to the next committee $C_{r+1}$.
Finally, Phase 3 consists of round $f+1$, where everyone wakes up, and the committee members of $C_f$ broadcast their values, and everyone decides on the maximum value that they have seen. See Figure~\ref{fig:multi} for a high-level overview of this process.

\begin{algorithm}[t] \small
\begin{multicols}{2}
\tcp{Phase 1 (round 1)}
\texttt{Player $p$ wakes up in round 1:}\\
$C_1,\cdots,C_{f+1} \gets$ \JoinCom($f,f+1$)\\
$Y \gets X$\\
Send $Y$ to every player in $C_1$\\
\textbf{upon} receiving values $Y'$ \textbf{do}:\\
\phantom{-------}$Y \gets \max(Y \cup (\bigcup Y'))$\\
\medskip
\tcp{Phase 2 (rounds $2,\dots,f$)}
\textbf{For each} round $r \in [2,f]$ \textbf{do}: \\
\If{$p \in C_{r-1}$} 
{\texttt{Wake up in round $r$}\\
Send $Y$ to every player in $C_r$}
\If{$p \in C_{r}$}
{\texttt{Wake up in round $r$}\\
\textbf{upon} receiving values $Y'$ \textbf{do}:\\
\phantom{-------}$Y \gets \max(Y \cup (\bigcup Y'))$}
\medskip
\tcp{Phase 3 (round $f+1$)}
\texttt{Player $p$ wakes up in round $f+1$}\\
\If{$p \in C_{f}$}
{Broadcast $Y$ to all\\}
\textbf{upon} receiving values $Y'$ \textbf{do}:\\
\phantom{-------}$Y \gets \max(Y \cup (\bigcup Y'))$\\
Decide on $Y$ at the end of round $f+1$\\
\end{multicols}
\caption{Multi Value Consensus. Code for player $p$.}
    \label{alg:multi}
\end{algorithm}

\begin{figure}[t]
    \centering 
    \resizebox{0.75\linewidth}{!}{\begin{tikzpicture}
\tikzset{
player/.style={
    draw=gray,
    thin,
    dashed,
    opacity=0.6
},
	message/.style={
    ->,
    ultra thick,
    >=Latex,
    shorten >=1pt, 
    shorten <=1pt, 
    rounded corners,
    color=cyan!70
},
    time/.style={thick, dashed}
}

\def\t{15}

\draw[thick, rounded corners] (0.5,-8.5) rectangle (2,2.5);
\node at (1.5, 3.3) {Phase 1};
\draw[thick, rounded corners] (2,-8.5) rectangle (13,2.5);
\node at (7.5, 3.3) {Phase 2};
\draw[thick, rounded corners] (13,-8.5) rectangle (14.5,2.5);
\node at (13.5, 3.3) {Phase 3};

\node[anchor=east] at (0,2) {$P_1$};
\draw[player] (0,2) -- (\t,2);
\node[anchor=east] at (0,1) {$\cdots$}; 
\draw[player] (0,1) -- (\t,1);
\node[anchor=east] at (0,0) {$P_{f+1}$};
\draw[player] (0,0) -- (\t,0);
\node[anchor=east] at (0,-1) {$P_{f+2}$};
\draw[player] (0,-1) -- (\t,-1);
\node[anchor=east] at (0,-2) {$\cdots$};
\draw[player] (0,-2) -- (\t,-2);
\node[anchor=east] at (0,-3) {$P_{2f+2}$};
\draw[player] (0,-3) -- (\t,-3);
\node[anchor=east] at (0,-4) {$P_{2f+3}$};
\draw[player] (0,-4) -- (\t,-4);
\node[anchor=east] at (0,-5) {$\cdots$};
\draw[player] (0,-5) -- (\t,-5);
\node[anchor=east] at (0,-6) {$P_{3f+3}$};
\draw[player] (0,-6) -- (\t,-6);
\node[anchor=east] at (0,-7) {$\cdots$};
\draw[player] (0,-7) -- (\t,-7);
\node[anchor=east] at (0,-8) {$P_{n}$};
\draw[player] (0,-8) -- (\t,-8);

\draw[thick, dashed, rounded corners, fill=blue!20, fill opacity=0.3] (1.25, -0.25) rectangle (3.25, 2.25);
\node[text=blue] at (3.5, 1.2) {$C_1$};
\draw[thick, dashed, rounded corners, fill=blue!20, fill opacity=0.3] (2.75, -3.25) rectangle (4.25, -0.75);
\node[text=blue] at (2.4, -1.8) {$C_2$};

\draw[thick, dashed, rounded corners, fill=blue!20, fill opacity=0.3] (3.75, -6.25) rectangle (5.25, -3.75);
\node[text=blue] at (3.4, -4.8) {$C_3$};

\draw[thick, dashed, rounded corners, fill=blue!20, fill opacity=0.3] (4.75, -7.25) rectangle (6.25, -6.75);
\node[text=blue] at (4.4, -6.8) {$C_4$};

\draw[thick, dashed, rounded corners, fill=blue!20, fill opacity=0.3] (5.75, -0.25) rectangle (7.25, 2.25);
\node[text=blue] at (5.4, 1.2) {$C_{h}$};

\draw[thick, dashed, rounded corners, fill=blue!20, fill opacity=0.3] (6.75, -3.25) rectangle (8.25, -0.75);
\node[text=blue] at (6.4, -1.8) {$C_{h+1}$};

\draw[thick, dashed, rounded corners, fill=blue!20, fill opacity=0.3] (7.75, -6.25) rectangle (9.25, -3.75);
\node[text=blue] at (7.4, -4.8) {$C_{h+2}$};

\draw[thick, dashed, rounded corners, fill=blue!20, fill opacity=0.3] (8.75, -7.25) rectangle (10.25, -6.75);
\node[text=blue] at (8.4, -6.8) {$C_{h+3}$};

\draw[thick, dashed, rounded corners, fill=blue!20, fill opacity=0.3] (9.75, -0.25) rectangle (11.25, 2.25);
\node[text=blue] at (9.4, 1.2) {$C_{f-2}$};

\draw[thick, dashed, rounded corners, fill=blue!20, fill opacity=0.3] (10.75, -3.25) rectangle (12.25, -0.75);
\node[text=blue] at (10.4, -1.8) {$C_{f-1}$};

\draw[thick, dashed, rounded corners, fill=blue!20, fill opacity=0.3] (11.75, -6.25) rectangle (13.25, -3.75);
\node[text=blue] at (11.4, -4.8) {$C_{f}$};

\foreach \step in {2,0,-1,-2,-3,-4,-5,-6,-7,-8}{\draw[message,shorten >=0px,-{}] (0.5, \step) -- (1.75, \step);}
\draw[message,shorten >=0px] (0.5, 1) -- (1.75, 1);
\foreach \step in {2,-8}{\draw[message,shorten >=10px] (1.75, \step) -- (1.75, 1);}
\draw[message] (3, 1) -- (3, -2);
\draw[message] (4, -2) -- (4, -5);
\draw[message] (5, -5) -- (5, -7);
\draw[message,dashed] (6,-7) -- (6,1);
\draw[message] (7, 1) -- (7, -2);
\draw[message] (8, -2) -- (8, -5);
\draw[message] (9, -5) -- (9, -7);
\draw[message,dashed] (10,-7) -- (10,1);
\draw[message] (11, 1) -- (11, -2);
\draw[message] (12, -2) -- (12, -5);
\draw[message,shorten >=0px,-{}] (12.5, -5) -- (13.5, -5);
\draw[message,shorten >=0px,-{}] (13.5, 2) -- (13.5, -8);
\foreach \step in {2,1,0,-1,-2,-3,-4,-5,-6,-7,-8}{\draw[message] (13.5, \step) -- (15, \step);}

\end{tikzpicture}}
    \caption{High-level Overview of Algorithm~\ref{alg:multi}}
    \label{fig:multi}
\end{figure}
\begin{theorem} \label{thm:multi}
Suppose that each player starts with an input value from a set of integers $D$.
Algorithm~\ref{alg:multi} solves consensus in $f+1$ rounds, sends $\mathcal{O}(f^3+nf)$ messages of size $\mathcal{O}(\log |D|)$ bits, and achieves an energy complexity of $\mathcal{O}(\lceil{f^2}/{n}\rceil)$ rounds, if at most $f$ players crash during the execution.
\end{theorem}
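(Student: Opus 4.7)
The plan is to dispatch the three quantitative bounds by direct counting, and then concentrate on correctness. The time bound is immediate from the phase structure: $1+(f-1)+1=f+1$ rounds. Message complexity decomposes phase-by-phase into $n(f+1)$ messages in Phase~1, at most $(f-1)(f+1)^2$ committee-to-committee sends in Phase~2, and $(f+1)n$ broadcasts in Phase~3, which sum to $\mathcal{O}(f^3+nf)$. For energy, Algorithm~\ref{alg:joincommittees} places each player into at most $\lceil f(f+1)/n\rceil$ committees, and a membership in $C_i$ forces a player awake only in rounds $i$ and $i+1$; adding rounds~$1$ and~$f+1$ gives $\mathcal{O}(\lceil f^2/n\rceil)$. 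Termination follows since every non-faulty player wakes in round $f+1$ and decides, and validity holds because $Y$ is always a maximum over input values.

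For agreement, my key intermediate statement will be a \emph{propagation lemma}: if a non-faulty player $p\in C_j$ holds $Y_p=v$ at the end of round $j$, then every non-faulty player holds $Y\geq v$ at the end of round $f+1$. I plan to prove this by induction on the committee index. Since $p$ is non-faulty, its round-$(j+1)$ send reaches every alive member of $C_{j+1}$, so all those members finish round $j+1$ with $Y\geq v$. For the inductive step, suppose at round $i\geq j+1$ every alive member of $C_i$ has $Y\geq v$. Then for some alive $C_{i+1}$ member to miss $v$, every alive $C_i$ sender---of which there are at least $f+1-c^{(i)}$, writing $c^{(i)}$ for the cumulative crash count through round $i$---must crash in round $i+1$ and skip that receiver, forcing $c^{(i+1)}\geq f+1$ and contradicting the crash budget. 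The same counting argument applied to the Phase~3 broadcast shows that every non-faulty player receives at least one message of value $\geq v$ from $C_f$ in round $f+1$, completing the lemma.

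Given the lemma, agreement follows by contradiction. Assume non-faulty $p,p'$ decide differently and set $v=Y_p^{(f+1)}>Y_{p'}^{(f+1)}$. Tracing how $v$ propagates to $p$ yields a chain $q^*_0,q^*_1,\ldots,q^*_f$, where $q^*_0$ has input $v$, each $q^*_i\in C_i$ holds $Y\geq v$ at end of round $i$, and $q^*_f$ delivers $v$ to $p$ in round $f+1$. If any $q^*_i$ is non-faulty, applying the propagation lemma to it forces $Y_{p'}^{(f+1)}\geq v$, a contradiction. Otherwise every chain member is faulty; isolating $v$ at the next member forces each $q^*_i$ to crash during its own transmission round (missing all alternative recipients), and $q^*_f$ to crash in round $f+1$ while missing $p'$. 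These crashes occur in $f+1$ distinct rounds, exceeding the budget of $f$. The boundary case $q^*_0\in C_1$ collapses cleanly: a non-faulty $q^*_0$ floods $C_1$ with $v$ and the lemma applies, while a faulty $q^*_0$ kills $v$ already in round~$1$, so the chain cannot reach $C_f$ at all.

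The main obstacle I expect is formalizing this isolation chain carefully. Because players may belong to multiple committees, the same player could play more than one role in the chain, and each crash must be charged exactly once across rounds; getting this bookkeeping right, along with rigorously handling the boundary case where the value's source already lies in $C_1$, is where the proof will demand the most care.
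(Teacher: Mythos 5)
Your quantitative bounds (time, message, and energy counts) follow the same direct accounting as the paper, and your termination/validity argument is identical. For agreement, however, you take a genuinely different route. The paper fixes $\delta$ as the maximum decided value and proves two supporting lemmas --- that any alive, awake sender of a value exceeding $\delta$ forecloses a decision on $\delta$ (Lemma~\ref{lem:nodec}), and that the committee maximum bounds every alive player's value (Lemma~\ref{lem:max}) --- then closes the argument by locating the earliest \emph{crash-free} round $t\le f+1$, which must exist since at most $f$ crashes occur over $f+1$ rounds; in that clean round the surviving committee synchronizes everyone. You instead prove a forward propagation lemma (a non-faulty committee member's value floods all later committees, since a committee of $f+1$ members cannot be entirely wiped out) and pair it with a backward chain-tracing argument: if two non-faulty players decide differently, the larger value must have been relayed along a chain $q_0^*,\dots,q_f^*$ in which each link crashes during its own transmission round, yielding $f+1$ crashes in $f+1$ distinct rounds and contradicting the budget. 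This is the classical ``one crash per round of hiding'' argument for flooding-based consensus. Both are sound; the clean-round argument is shorter once its lemmas are in place, while your version avoids reasoning about the global decision maximum $\delta$ and about values held by non-committee members, and makes the necessity of $f+1$ rounds more transparent. The bookkeeping you worry about does resolve cleanly: since each chain member must crash in a distinct round, the chain members are automatically distinct players, and a chain cannot ``start late'' at some $C_j$ with $j>1$, because every player (committee member or not) sends its input to $C_1$ in round~1, so an originator surviving round~1 already floods $C_1$ and your propagation lemma takes over.
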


\subsection{Proof of Theorem~\ref{thm:multi}}
We first show that the algorithm correctly solves consensus.
Every correct player decides on its variable $Y$ after $f+1$ rounds, which ensures the claimed bound on the time complexity.
By the code of the algorithm, $Y$ is only updated to the input values of other players, and thus we obtain the following.
\begin{lemma} \label{lem:validity_term}
Algorithm~\ref{alg:multi} satisfies validity and terminates in $f+1$ rounds.
\end{lemma}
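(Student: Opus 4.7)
The plan is to verify both claims essentially by inspecting Algorithm~\ref{alg:multi}, with a short inductive argument for validity.

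For termination, I would simply observe that the Phase~3 block of the pseudocode is executed unconditionally by every alive player in round $f+1$: the player wakes up, performs a receive/update step on $Y$, and then decides on $Y$ at the end of that round. Thus every non-faulty player decides within $f+1$ rounds, matching the claimed time complexity.

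For validity, the core observation is that the variable $Y$ only ever takes values that are the initial input of some (possibly faulty) player. I would establish this as an invariant: at every round $r$ and at every alive player $p$, the current value of $Y$ at $p$ equals the input $X_q$ of some player $q$. The base case is the assignment $Y \gets X$ at the start of Phase~1. For the inductive step, every subsequent update has the form $Y \gets \max(Y \cup (\bigcup Y'))$, where the received set $Y'$ consists of values broadcast by players in the preceding committee (or in $C_f$ during Phase~3). Applying the inductive hypothesis to each sender at the moment it transmitted, every element of the set being maximized is an input value of some player; since $\max$ returns one of its arguments, the invariant is preserved.

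Once the invariant is in place, validity follows immediately: each non-faulty player decides on its final value of $Y$ in round $f+1$, which is therefore the input of some player. The only mildly delicate point is well-foundedness of the induction, namely that a sender's $Y$-value already satisfies the invariant at transmission time; this is guaranteed by the synchronous round structure, in which messages sent in round $r$ reflect the sender's state at the end of round $r-1$. No step poses a real obstacle, and the lemma essentially amounts to unpacking the pseudocode.
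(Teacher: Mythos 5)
Your proposal is correct and matches the paper's argument, which also proves termination by noting that every player decides on $Y$ at the end of round $f+1$ and validity by observing that $Y$ is only ever updated to input values of other players. Your version merely spells out the invariant and induction that the paper leaves implicit.
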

To show that the algorithm satisfies agreement, we need to prove several technical lemmas. 
Consider the set $D$ of values decided by the players after round $f+1$ in some given execution, and let $\delta = \max(D)$. 
We say that \emph{a player $p$ has value $v$} if $p$'s variable $Y=v$ at that point.
\begin{lemma} \label{lem:nodec}
Suppose that there exists some round $r \in [1,f+1]$ with the property that there is an awake player $p$, who is alive in round $r$ and sends a message with value $v_p > \delta$. Then no player decides on $\delta$. 
\end{lemma}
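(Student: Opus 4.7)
The plan is to show that, once the value $v_p > \delta$ enters the pipeline at round $i$, it propagates through the committees $C_1, C_2, \ldots, C_f$ and reaches every alive player by the end of round $f+1$. Since $v_p > \delta$, this forces every decision to exceed $\delta$, so no player can decide on $\delta$.

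The core of the argument is an induction on $r$ from $i+1$ to $f+1$ of the invariant: \emph{at the start of round $r$, every alive member of the committee sending in round $r$ has $Y \ge v_p$}. Here the sending committee is $C_{r-1}$ for $r \in [2, f]$ and $C_f$ for $r = f+1$. The base case uses the hypothesis: since $p$ is alive in round $i$, awake, and sends $v_p$ to the round-$i$ receiving group (namely $C_1$ if $i = 1$, $C_i$ if $i \in [2,f]$, or all players if $i = f+1$), every alive member of that group obtains $Y \ge v_p$ by the start of round $i+1$; if $i = f+1$, this is already the conclusion.

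For the inductive step, assume every alive member of $C_{r-1}$ has $Y \ge v_p$ at the start of round $r$. In round $r$ they all send $Y$ to $C_r$ (or broadcast to all, if $r = f+1$). Suppose for contradiction that some alive $q$ in the receiving group has $Y < v_p$ at the start of round $r+1$. Then every alive sender in $C_{r-1}$ must have crashed during round $r$ while failing to deliver to $q$; together with any earlier crashes among $C_{r-1}$ members, this forces at least $|C_{r-1}| = f+1$ crashes in $C_{r-1}$ alone, contradicting the overall budget of $f$ crashes. Hence every alive member of $C_r$ (or every alive player, when $r = f+1$) has $Y \ge v_p$ at the start of round $r+1$. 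Applied at $r = f+1$, this says every alive player holds $Y \ge v_p > \delta$ at decision time, so no player decides on $\delta$.

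I expect the main obstacle to be the base case: carefully reconciling ``$p$ sends a message with value $v_p$'' in the crash-prone sleeping model so that the receiving group is guaranteed to be updated. The asymmetry $|C_j| = f+1 > f$ drives the clean counting argument in the inductive step, but the base case requires pinning down the interpretation that $p$'s send actually informs the receivers in round $i$ despite the possibility of $p$ crashing during that round; once this is in place, the counting argument propagates the wave crisply to the end.
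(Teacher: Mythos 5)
Your proof is correct and follows essentially the same route as the paper: the value $v_p>\delta$ is relayed committee-to-committee from round $i$ onward, and the fact that each committee has $f+1$ members while at most $f$ players crash guarantees the relay never breaks, so every alive player holds a value exceeding $\delta$ at decision time. The only cosmetic difference is that you carry the invariant ``\emph{all} alive members of the current committee hold $Y\ge v_p$'' whereas the paper tracks ``\emph{some} surviving committee member forwards a value $>\delta$''; both rest on the identical counting argument, and both share the same implicit reading of the premise that $p$'s round-$i$ messages are all delivered.
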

\begin{proof} 
Consider any round $r$ that satisfies the premise of the claim.
It follows that $p$ successfully sends its message to all destinations.
In the case that $r=f+1$, this means that $p \in C_f$, and every player receives $p$'s value in round $f+1$, which is strictly greater than $\delta$. 
Hence, none of the players will decide on $\delta$.
On the other hand, if $r\le f$, then every alive member $p' \in C_{r}$ receives the message from $p$ and updates its variable $Y$ such that $Y>\delta$. Since not all members of $C_{r}$ can crash, it follows that there is some alive player in $C_{r}$, who is awake and alive in round $r+1$ and who sends a value strictly greater than $\delta$. 
Extending this argument inductively for all rounds $r,\dots,f+1$, completes the proof of the lemma.
\end{proof}
\begin{lemma} \label{lem:max}
Consider any round $r \in [1,f]$ and let $m \ge \delta$ be the maximum value held by any player in $C_{r}$ at the start of round $r+1$. 
Then, every alive player has a value of at most $m$ at the beginning of round $r+1$.
\end{lemma}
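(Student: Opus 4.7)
My plan is to prove the lemma by contradiction. Suppose some player $q$ alive at the start of round $i+1$ holds $Y_q = v > m$. Since $q \in C_i$ would immediately give $v \le m$, we have $q \notin C_i$; moreover, non-$C_i$ players do not update in round $i$, so $Y_q(i) = Y_q(i-1)$, and $v$ was already held by $q$ at the start of round $i$.

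To locate the source of $v$, let $h$ be the latest round in $[1, i-1]$ with $q \in C_h$, or $h = 0$ if $q$ is in no such committee (in which case $v = X_q$ is $q$'s input). Either way, $q$ already holds $v$ at the start of round $\max(h,0) + 1$ and, in the next round (round $1$ if $h=0$, or round $h+1$ if $h \ge 1$), broadcasts $v$ to every member of the receiving committee $C_{s_0}$, where $s_0 := \max(h,0) + 1$. Because $q$ is alive at start of round $i+1 \ge s_0 + 1$, it survives the sending round and all its messages are delivered; consequently, every member of $C_{s_0}$ that is alive at start of round $s_0 + 1$ has $Y \ge v$.

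I would then run a forward induction along the committee chain $C_{s_0}, C_{s_0+1}, \ldots, C_i$ with invariant: every alive-at-$(s+1)$ member of $C_s$ holds $Y \ge v$. For the inductive step, the alive-at-$(s+2)$ members of $C_s$ are a subset of the alive-at-$(s+1)$ ones and thus still hold $Y \ge v$; they send this value in round $s+1$, and because they survive that round their messages are fully delivered to every alive-at-$(s+2)$ receiver in $C_{s+1}$. The committee-size bound $|C_s| = f+1 > f$ ensures by pigeonhole that at least one alive-at-$(s+2)$ member of $C_s$ exists, so every alive-at-$(s+2)$ member of $C_{s+1}$ receives a value $\ge v$ and updates $Y$ accordingly. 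At $s = i$ this yields an alive-at-$(i+1)$ member of $C_i$ with $Y \ge v$, giving $m \ge v$ and contradicting $v > m$.

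The main obstacle is that the particular $C_s$ member that first receives $v$ could crash during round $s+1$ before forwarding it, which would sever the propagation chain. The argument handles this by framing the invariant over \emph{every} alive member of $C_s$ (not just one specific carrier) and relying on $|C_s| > f$: once all alive members of $C_s$ hold $\ge v$, pigeonhole guarantees at least one of them survives round $s+1$ to relay the value to $C_{s+1}$, regardless of which individual members the adversary chooses to crash.
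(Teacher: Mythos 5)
Your proof is correct, and it takes a genuinely different route from the paper's for part of the argument. The paper splits on whether $v$ is the offending player's input value: if so, it runs essentially your propagation-plus-pigeonhole argument starting from $C_1$; if not, it observes that the player was awake and alive when forwarding $v>\delta$ to the next committee and invokes Lemma~\ref{lem:nodec} to contradict the fact that some player decides on $\delta$ --- this is the only place the hypothesis $m\ge\delta$ is used. You instead handle both cases uniformly by locating the last committee round $h\le i-1$ in which $q$ could have acquired $v$, and launching the relay argument from $C_{h+1}$ (or from $C_1$ when $h=0$, which coincides with the paper's first case). This buys you a self-contained proof that never appeals to Lemma~\ref{lem:nodec} nor to the decision value $\delta$; in effect you prove the stronger statement that $m$ bounds every alive player's value regardless of its relation to $\delta$, at the cost of a slightly more careful bookkeeping step to pin down the round in which $q$ first held $v$ (your observation that $q\notin C_j$ for $j\in[h+1,i]$, so $Y_q$ is frozen over that interval, is the point that makes this work). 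The core mechanism --- all $f+1$ members of $C_s$ adopt a value $\ge v$ because the sender survives its sending round, and pigeonhole on $|C_s|=f+1>f$ guarantees a surviving relay into $C_{s+1}$ --- is the same in both proofs.
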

\begin{proof}
The statement trivially holds for any alive player in $C_r$.
Now, consider some alive player $p \notin C_{r}$ and assume, towards a contradiction, that $p$'s value is $v>m$ when it starts round $r+1$.
First, consider the case where $v$ was $p$'s input value. 
This means that $p$ sent $v$ to every player in $C_1$ initially.
Since $p$ did not crash, every $p' \in C_1$ must have set $Y$ to some value $v_{p'} \ge v$ at the end or round $1$ and subsequently sends $v_{p'}$ to everyone in $C_2$ in round $2$.
Recall that every committee has at least one member who does not crash.
Thus, it follows by induction that every player $q \in C_{r'}$ ($r' \ge 2$) will receive a message with a value no smaller than $v$ in round $r'$ from some alive player in $C_{r'-1}$ and update its own value accordingly. 
Consequently, $q$ has set $Y$ to a value strictly greater than $m$ at the start of round $r'+1$.
In particular, this statement holds for $r'=r$, which contradicts the assumption that $m$ was the maximum value among players in $C_r$.
Now suppose that $v$ was not $p$'s input value, which means that $p$ must have been a member of some committee $C_{r'}$ and adopted $v$ as its value in some round $r'\le r$.
Since $p$ did not crash in round $r'+1$ and recalling that $v > \delta$, applying Lemma~\ref{lem:nodec} yields a contradiction to the assumption that some player decided on $\delta$. 
\end{proof}
\begin{lemma} \label{lem:agreement}
Algorithm~\ref{alg:multi} satisfies agreement.
\end{lemma}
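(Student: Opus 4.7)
The plan is to argue by contradiction: suppose some player $p_1$ decides $\delta=\max(D)$ while another player $p_2$ decides $v_2<\delta$, and derive a lower bound of $f+1$ distinct crashed players, contradicting the assumption of at most $f$ crashes.

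First I will apply the contrapositive of Lemma~\ref{lem:nodec}: since $\delta$ is decided, no alive awake player ever sends a value exceeding $\delta$, so all inputs and all $Y$-values of alive players stay $\le \delta$ throughout. Next I will propagate this failure backwards through the committees. Since $|C_f|=f+1>f$, some $C_f$ member is non-faulty and its round-$(f+1)$ broadcast reaches every alive player, including $p_2$; consequently no non-faulty $C_f$ member has $Y\ge\delta$ at the start of round $f+1$. Inductively downward on $i$: if any non-faulty $C_{i-1}$ member held $Y\ge\delta$ at the start of round $i$, its full round-$i$ broadcast would equip every non-faulty $C_i$ member with $Y\ge\delta$ at the start of round $i+1$, contradicting the conclusion just established for index $i$. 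Pushing the induction down to $i=1$ shows that no non-faulty player has input $\delta$.

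Validity forces $\delta$ to be some player $p_\delta$'s input, so $p_\delta$ must be faulty. If $p_\delta$ did not crash in round $1$, its broadcast would reach every alive $C_1$ member (including a non-faulty one), contradicting the base case; so $p_\delta$ crashes in round $1$. For each $i\in[1,f]$, let $F_i\subseteq C_i$ be the set of players alive at the start of round $i+1$ with $Y\ge\delta$. The backward induction forces every $F_i$ member to be faulty, and a parallel argument shows each must crash precisely in round $i+1$ (else its full broadcast would propagate $\delta$ to the non-faulty members of $C_{i+1}$, or if $i=f$ directly to $p_2$). Since a faulty player has a unique crash round, the sets $\{p_\delta\}, F_1, \dots, F_f$ are pairwise disjoint.

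To conclude, I will show each $F_i$ is nonempty, reasoning downward from $i=f$. Since $p_1$ is non-faulty with input less than $\delta$ (no non-faulty player has input $\delta$) and cannot lie in any $F_j$, $p_1$'s $Y=\delta$ must arrive via a round-$(f+1)$ broadcast from some $C_f$ member holding $\delta$, so $F_f\ne\emptyset$. For the downward step, any $s\in F_i$ with $i\ge 2$ cannot have input $\delta$ (that would pin its crash round to $1$, not $i+1$), and any earlier acquisition of $\delta$ would place $s$ in some $F_j$ with $j<i$, contradicting pairwise disjointness; hence $s$ must have received $\delta$ in round $i$ itself, and the corresponding $C_{i-1}$ sender then witnesses $F_{i-1}\ne\emptyset$. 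Counting, the execution produces at least $1+\sum_{i=1}^f|F_i|\ge f+1$ distinct crashed players, the desired contradiction. The main obstacle will be that JoinCommittees allows a single player to lie in multiple committees, which at first glance threatens the distinctness of the crash count; this is resolved cleanly by the observation that each faulty player has a unique crash round, making its $F_i$-membership unique.
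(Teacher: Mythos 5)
Your proof is correct, but it takes a genuinely different route from the paper's. The paper argues \emph{forward} from the earliest crash-free round $t\in[1,f+1]$ (which exists since at most $f$ players crash): in that round the surviving committee delivers all of its messages, Lemma~\ref{lem:max} guarantees no larger value is hiding elsewhere, and $\delta$ then propagates unimpeded to every decider. You instead argue by \emph{counting crashes}: assuming two different decision values, you run a backward induction showing that no non-faulty member of $C_i$ may hold $\delta$ at the start of round $i+1$, so the value $\delta$ can only survive through a chain of players (one with input $\delta$ crashing in round $1$, and one member of each $C_i$ crashing in round $i+1$) whose crash rounds are pairwise distinct, yielding $f+1$ distinct crashed players and a contradiction. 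These are the two classical dual arguments for $(f+1)$-round flooding consensus, correctly adapted to the committee structure; your key observation that distinct crash rounds force the sets $\{p_\delta\},F_1,\dots,F_f$ to be disjoint properly handles the fact that JoinCommittees can place one player in several committees. Your version dispenses with Lemma~\ref{lem:max} and needs only the contrapositive of Lemma~\ref{lem:nodec} plus validity, and it makes explicit why the committee size $f+1$ (one non-faulty member per committee) is exactly what is needed; the paper's version is shorter given the lemmas it has already established, though it leaves the $t=f+1$ case and the final induction more implicit. One small point to tighten: your assertion that the player deciding $\delta$ is non-faulty relies on the convention that a player who crashes does not subsequently decide; under that (standard) reading the claim is fine, and even if a decider were allowed to crash in round $f+1$ it would itself witness $F_f\neq\emptyset$, so the count is unaffected.
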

\begin{proof}
Consider the earliest round $r \in [1,f+1]$ in which no player crashes.
First, assume that $r=1$. 
Then, it follows from Lemma~\ref{lem:nodec} that no player has an input greater than $\delta$. 
Thus, it follows from validity (see Lem.~\ref{lem:validity_term}) that $\delta$ was the maximum input value, and hence every player in $C_1$ adopts $\delta$ in round $1$.
It follows that all alive members of every committee will have value $\delta$, and thus every player decides on $\delta$ eventually.  
On the other hand, if $1 < r < f+1$, then Lemma~\ref{lem:nodec} and the fact that someone decides on $\delta$ tell us that every alive player in $C_{r-1}$ must have sent a value of at most $\delta$ in round $r$.
Moreover, Lemma~\ref{lem:max} ensures that all alive players (whether they are sleeping or awake) have a value of at most $\delta$ at the start of round $r+1$; let $V$ denote this set of values.
If $\max V < \delta$, then Lemma~\ref{lem:max} implies us that every alive player has a value strictly less than $\delta$.
Since values are updated using the maximum rule, this contradicts the assumption that some player decided on $\delta$, and hence it follows that some alive player $p \in C_r$ has the value $\delta$, which is the maximum among all committee members.
And, applying Lemma~\ref{lem:max} guarantees that no greater values are held by players not in $C_r$.
Since $p$ does not crash, a simple inductive argument shows that $\delta$ is the only possible decision value.
\end{proof}
It remains to prove the claimed bound on the energy and message complexity:
The first and the last phases require every player to be awake for a constant number of rounds. 
The second phase requires the players in each committee to be awake only in the two rounds during which their respective committee needs to receive or send messages. Since each player belongs to at most $\lceil \frac{(f+1)\cdot f}{n} \rceil$ committees, the energy complexity is $\mathcal{O}(\lceil{f^2}/{n}\rceil)$.
During Phases 1 and 3, the algorithm sends $\mathcal{O}(f\cdot n)$ messages. 
In each round $r$ of Phase~2, all players in committee $C_{r-1}$ send a message to every player in committee $C_r$, which amounts to $\mathcal{O}(f^2)$ messages for each of the $f-1$ rounds, and thus the total number of messages is $\mathcal{O}(f^3+f\cdot n)$.

\section{Binary Consensus} \label{sec:binary}
In this section, we show how to improve over the energy complexity bounds obtained in Section \ref{sec:multi}, by restricting the domain of the input values to a single bit, while keeping the optimal time complexity of $f+1$ rounds.
Recall that, in Algorithm \ref{alg:multi}, the energy complexity was dominated by the number of committees that each player is a member of, which in turn depends on the size of the individual committees, i.e., $f+1$. 
Here, we explore the natural strategy of reducing the committee size to achieve a lower energy complexity. 
While conceptually simple, having smaller committees entails that we can no longer rely on each committee having at least one alive member, requiring a player to spread its value among multiple committees instead. 
Moreover, it might create an opening for the adversary to strategically crash players, one at each round, such that eventually a round is reached where all the members of a committee (or in general all the players that should receive messages) have already crashed. This allows the adversary to abstain from crashing during that particular round without any player receiving new information, which makes it impossible for the algorithm to terminate after $f+1$ rounds.
We compensate for this lack of information spreading across committees by using players aware of decisive information (here, players holding 1) as propagating units for further committees.
Assume players may choose 1 over 0 when they have the choice (i.e., they know at least 1 player holds each of those values as their initial value). This situation allows a player to know that everyone may decide 1 when he receives the value, and/or he will crash before the end of the procedure.

\subsection{Description of the Algorithm}
We now give a high-level overview of our algorithm; see  Algorithm \ref{alg:binary} for the detailed pseudo code.

\begin{algorithm}[t]
\small
\begin{multicols}{2}
\tcp{Phase 1 (round 1)}
\texttt{player $p$ wakes up in round 1:}\\
$h\gets \min \{f, n'-\sqrt{n'}+1\}$\\

\tcp{$h$ = first round of Phase 3 or 4, \\
depending on the relation between \\
$f$ and $n'$}
Locally compute first batch of \\
committees for Phases 1 and 2:
$C_1,\cdots,C_{h-1}\gets$\JoinCom($h-1, \sqrt{n'},n'$)\\
Locally compute second batch of \\ committees 
: $C_{h},\cdots,C_{f}\gets$\JoinCom($f -h+1, f+1,n$)\\
$Z\gets 0$; $Y \gets 0$; $T \gets 0$\\
\If{$X = 1$}
{
$Y\gets 1$; $T \gets \ceil {\frac{f+1}{\sqrt{n'}}}$ \label{alglin:binary_validity}\\
Send $1$ to every player in  committee $C_1$
}
\If{$p$ received value $1$ and $Y = 0$}
{$Y \gets 1$; $T \gets \lceil \frac{f+1}{\sqrt{n'}} \rceil$}
\smallskip
\tcp{Phase 2 (rounds $2,\ldots,h-1$)}
\textbf{For each} round $r \in [2,h-1]$ \textbf{do}: \\
\If{$T>0$\label{algline:impbinaryLoopCondition}} 
{\texttt{player $p$ wakes up in round $r$}\\
Send $1$ to every player in $C_r$\\
$T \gets T - 1$
} 

\If{$p \in C_{r}$}
{\texttt{player $p$ wakes up in round $r$}\\
\If{$p$ received value $1$ and $Y = 0$}
{$Y \gets 1$; $T \gets \lceil \frac{f+1}{\sqrt{n'}} \rceil$\\}}

\medskip
\tcp{Phase 3 (rounds $h, \ldots, f-1$)}
\textbf{For each} round $r \in [n'-\sqrt{n'} +1,f-1]$ \textbf{do}: \\
\tcp{r = first round of the phase}
\If{$Y=1$ and $r=n'-\sqrt{n'} +1$
}{
\texttt{player $p$ wakes up in round $r$} 
\\
Send 1 to every player in $C_{r}$.
}
\If{$T>0$}{
    \texttt{player $p$ wakes up in round $r$} \\
    Send 1 to every player in $C_r$.\\
    $T\gets T-1$
    }
\If{$p \in C_r$}{
    \texttt{player $p$ wakes up in round $r$} \\
    \If {$p$ received value $1$ \and $Z=0$}
        {$Z\gets 1$ ;$T\gets 1$}
    }

\medskip
\tcp{Phase 4 (rounds $f$, $f+1$)}
\tcp{round $f$}
\texttt{Player $p$ always wakes up in round $f$}\\
\If{$Y=1$ or $Z=1$} {
    Send 1 to every player in $C_f$
}
\If{$p\in C_f$ and $p$ received value $1$} {
    $Y\gets 1$
}
\tcp{round $f+1$}
\texttt{Player $p$ always wakes up in round f+1}\\
\If{$p \in C_f$ and $Y=1$ \label{algline:binary_final_broadcast_condition}}
{Send $1$ to all\\}
\If{received value $1$}
{Decide 1 at the end of round $f+1$}
\Else {
Decide 0 at the end of round $f+1$\\
}
\end{multicols}
\vspace{.7em}
\caption{Binary Consensus. Code for player $p$, who has input value $X$.} 
\label{alg:binary}
\end{algorithm}

The algorithm is divided into four phases, and employs committees of players, of size roughly $\sqrt{n}$. During the first phase, players holding 1 as their initial value send it to the first committee. 
During the second phase, players who received the value 1 
become active for $\mathcal{O}(f/\sqrt{n})$ rounds, and propagate this value in each round to the 
currently awake committee.
Depending on the relation between $f$ and $\lfloor \sqrt{n} \rfloor ^2 - \lfloor \sqrt{n} \rfloor+1$, a third phase might happen. In this %
third phase, players behave the same as in phase 2, 
except that the committees are of size $f+1$ and the active players propagate the value 1 only if they received it during the last round. The fourth phase consists  of two last rounds. In round $f$, every player aware of the value 1 sends it to a committee of size $f+1$, and in the last round, each member of that committee broadcasts this value to all players if it received it on the previous round.
This strategy ensures that, assuming an adversary crashes at most $f$ players, if a player decides on 1 in the last round, then there must exist a non-faulty player holding 1 in some earlier round $< f+1$. This strategy also ensures a worst case energy complexity of $O(\ceil{f/\sqrt{n}})$.
A more detailed description follows next.

During initialization, we use Procedure \JoinCom\ (see Algorithm \ref{alg:joincommittees}) to fix $f$ different static committees $C_1,\ldots,C_f$.
Let $n' =\floor{\sqrt{n}}^2$ be the largest square number smaller or equal to $n$ and let $h = \min \{n'-\sqrt{n'}+1, f\}$. The size of committees depends on the relation between $f$ and $n'$. 
Specifically, we divide the committees into two parts: $C_1, \ldots, C_{h-1}$, where each committee is of size $\sqrt{n'}$, and $C_h, \ldots, C_f$, where each committee is of size $f+1$.
Note that if $f\leq n'-\sqrt{n'}+1$, then $h=f$ and $C_{h-1} = C_{f-1}$ (in other words, all the committees except for $C_f$ are of size $\sqrt{n'}$).
If $f>n'-\sqrt{n'}+1$, then there could be more than one committee of size $f+1$.
To simplify the notation, we also define $C_{f+1} = P$.
A crucial property of being a member of committee $C_r$ is that every $p \in C_r$ will wake up (at least) in round $r$, and thus we say that round $r$ is the \emph{awake round of committee $C_r$}.
Each player has two local variables $Y, Z$ that are initialized with 0, in which we store the player's current decision estimate (each variable is used in exactly one of two parts in the algorithm explained later).
Since our algorithm restricts itself to sending messages with bit value $1$, called \emph{$1$-valued message}, we only ever update $Y$ and $Z$ by setting $Y\gets 1$ or $Z\gets 1$, if they were $0$ before receiving such a message.
 
Conceptually, the algorithm consists of four phases:

Phase 1 consists only of round $1$, in which every player $p$ wakes up and, if $p$ has input value $X=1$, then it sends a message with a single bit $1$ to all members of committee $C_1$.
Player $p$ becomes \emph{active}, in the sense that $p$ remains awake for the next $T=\lt\lceil \frac{f+1}{\sqrt{n'}} \rt\rceil$ rounds, where $T$ is a counter that is decremented by $1$ in each subsequent round.
Any member $q \in C_1$ who has $Y=0$ and receives such a message becomes \emph{active} as well.

In Phase 2 (rounds $2, \ldots, h-1$), we propagate these $1$-valued messages across all other committees as follows. 
If a player $p' \in C_r$ receives the value $1$ for the first time upon waking up in round $r$ (i.e., $p'$ never received $1$ before nor had $1$ as their initial value), then $p'$ will become \emph{active}, and remain awake for the next $\ceil{\frac{f+1}{\sqrt{n'}}}$ rounds, during which $p'$ sends bit $1$ to the next corresponding committees whose members are awake in the respective round. Note that the length of Phase 2 depends on $h$ (or in other words the relation between $f$ and $n'$). If $f$ is relatively small ($f\leq n'-\sqrt{n'}+1$), then $h=f$ and upon completing Phase 2 in round $h-1=f-1$, the players proceed directly to Phase 4 (described below). Otherwise, the players proceed to Phase 3.

In Phase 3 (rounds $h, \ldots, f-1$), the size of the committees are $f+1$ and we propagate $1$-valued message in a similar way to Algorithm {\ref{alg:multi}}. During round $h$ (the first round of Phase 3), every player awakes and sends a $1$-valued message to every player in $C_h$, if it has $Y=1$.
Every player $p\in C_h$ that receives a $1$-valued message during round $h$ will awake during round $h+1$ and send a $1$-valued message to $C_{h+1}$. In subsequent rounds of Phase 3, every player $p\in C_r$ that receives a $1$-valued message for the first time in Phase 3 (it may have received a $1$-value message in previous phases) during round $r$ will awake in round $r+1$ and send a $1$-valued message to $C_{r+1}$.

In Phase 4 (rounds $f, f+1$), all players wake up. In round $f$,  every player that has $Y=1$ or $Z=1$ sends a $1$-valued message to every player in $C_f$. Every player in $C_f$ that receives a 1-valued message sets $Y=1$. In round $f+1$, a player $p$ broadcast 1 if $p\in C_f$ and $Y=1$. A player decides 1 if it receives a $1$-valued message during round $f+1$ and decides 0 otherwise. 

Figure \ref{fig:binary-flow} schematically illustrates the execution of Algorithm~\ref{alg:binary}.

\begin{figure}[t]
    \centering 
    \resizebox{0.9\linewidth}{!}{\begin{tikzpicture}
\tikzset{
player/.style={
    draw=gray,
    thin,
    dashed,
    opacity=0.6
},
message/.style={
    ->,
    ultra thick,
    >=Latex,
    shorten >=1pt, 
    shorten <=1pt, 
    rounded corners,
    color=cyan!70
},
cross/.style={
    path picture={ 
    \draw[black]
    (path picture bounding box.south east) -- (path picture bounding box.north west) 
    (path picture bounding box.south west) -- (path picture bounding box.north east);
    }
},
    time/.style={thick, dashed}
}

\def\t{15} %

\draw[thick, rounded corners] (0.5,-0.5) rectangle (2,10.5);
\node at (1.5, 11.3) {Phase 1};
\draw[thick, rounded corners] (2,-0.5) rectangle (7.5,10.5);
\node at (4.5, 11.3) {Phase 2};
\draw[thick, rounded corners] (7.5,-0.5) rectangle (13.5,10.5);
\node at (10.5, 11.3) {Phase 3};
\draw[thick, rounded corners] (13.5,-0.5) rectangle (15.5,10.5);
\node at (13.5, 11.3) {Phase 4};

\node[anchor=east] at (0,10) {$p_1$};
\draw[player] (0,10) -- (\t,10);
\node[anchor=east] at (0,9) {$\cdots$}; 
\draw[player] (0,9) -- (\t,9);
\node[anchor=east] at (0,8) {$p_{\sqrt{n'}+1}$};
\draw[player] (0,8) -- (\t,8);
\node[anchor=east] at (0,7) {$\cdots$};
\draw[player] (0,7) -- (\t,7);
\node[anchor=east] at (0,6) {$p_{2\sqrt{n'}+1}$};
\draw[player] (0,6) -- (\t,6);
\node[anchor=east] at (0,5) {$\cdots$};
\draw[player] (0,5) -- (\t,5);
\node[anchor=east] at (0,4) {$p_{3\sqrt{n'}+1}$};
\draw[player] (0,4) -- (\t,4);
\node[anchor=east] at (0,3) {$\cdots$};
\draw[player] (0,3) -- (\t,3);
\node[anchor=east] at (0,2) {$p_{4\sqrt{n'}+1}$};
\draw[player] (0,2) -- (\t,2);
\node[anchor=east] at (0,1) {$\cdots$};
\draw[player] (0,1) -- (\t,1);
\node[anchor=east] at (0,0) {$p_{n}$};
\draw[player] (0,0) -- (\t,0);

\node [draw,circle](pi1) at (1.5,9){$p_i$}; 
\node [draw,circle](pi2) at (2.5,9){$p_i$}; 
\node [draw,circle,cross](pi3) at (3.5,9){$p_i$}; 
\node [draw,circle](pj1) at (2.5,7){$p_j$}; 
\node [draw,circle](pj2) at (3.5,7){$p_j$}; 
\node [draw,circle,cross](pj3) at (4.5,7){$p_j$};
\node [draw,circle](pk1) at (3.5,5){$p_k$}; 
\node [draw,circle](pk2) at (4.5,5){$p_k$}; 
\node [draw,circle,cross](pk3) at (5.5,5){$p_k$}; 
\node [draw,circle](pl1) at (4.5,3){$p_l$}; 
\node [draw,circle](pl2) at (5.5,3){$p_l$}; 
\node [draw,circle,cross](pl3) at (6.5,3){$p_l$}; 
\node [draw,circle](pm1) at (5.5,1){$p_m$}; 
\node [draw,circle](pm2) at (8,1){$p_m$};
\node [draw,circle](pn1) at (8,8){$p_n$};
\node [draw,circle](pn2) at (9,8){$p_n$};
\node [draw,circle](po1) at (9,6){$p_o$};
\node [draw,circle](po2) at (10,6){$p_o$};
\node [draw,circle](pp1) at (10,3){$p_p$};

\node [draw,circle](pa) at (14,4.5){$p_a$};

\node [draw,circle](pa2) at (15,4.5){$p_a$};

\draw[thick, dashed, rounded corners, fill=blue!20, fill opacity=0.3] (0.6, 10.1) rectangle (1.9, 8.1);
\node[text=blue] at (1, 8.5) {$C_1$};

\draw[thick, dashed, rounded corners, fill=blue!20, fill opacity=0.3] (2.1, 8.1) rectangle (2.9, 6.1);
\node[text=blue] at (2.5, 6.4) {$C_2$};

\draw[thick, dashed, rounded corners, fill=blue!20, fill opacity=0.3] (3.1, 6.1) rectangle (3.9, 4.1);
\node[text=blue] at (3.5, 4.4) {$C_3$};

\draw[thick, dashed, rounded corners, fill=blue!20, fill opacity=0.3] (4.1, 4.1) rectangle (4.9, 2.1);
\node[text=blue] at (4.5, 2.4) {$C_4$};

\draw[thick, dashed, rounded corners, fill=blue!20, fill opacity=0.3] (5.1, 2.1) rectangle (5.9, 0.5);
\node[text=blue] at (5.5, 0.3) {$C_{...}$};

\draw[thick, dashed, rounded corners, fill=blue!20, fill opacity=0.3] (7.6, 10.1) rectangle (8.4, 7.1);
\node[text=blue] at (8, 9.5) {$C_{h}$};

\draw[thick, dashed, rounded corners, fill=blue!20, fill opacity=0.3] (8.6, 7.1) rectangle (9.4, 4.1);
\node[text=blue] at (9, 5.3) {$C_{h+1}$};

\draw[thick, dashed, rounded corners, fill=blue!20, fill opacity=0.3] (9.6, 4.1) rectangle (10.4, 1.1);
\node[text=blue] at (10, 2.3) {$C_{h+2}$};

\draw[thick, dashed, rounded corners, fill=blue!20, fill opacity=0.3] (13.6, 6.1) rectangle (14.4, 3.1);
\node[text=blue] at (13, 5.5) {$C_{f}$};

\draw[thick, dashed, rounded corners, fill=blue!20, fill opacity=0.3] (14.6, 10.1) rectangle (15.4, -0.1);
\node[text=blue] at (16, 9.5) {$C_{f+1}$};

\foreach \step in {10,7,5,1}{\draw[message,shorten >=0px,-{}] (0.5, \step) -- (1.5, \step);}
\foreach \step in {10,1}{\draw[message,shorten >=10px] (1.5, \step) -- (1.5, 9);}
\draw[message,dashed] (pi2) -- (pj1);
\draw[message,dashed] (pj2) -- (pk1);
\draw[message,dashed] (pk2) -- (pl1);
\draw[message,dashed] (pl2) -- (pm1);
\draw[message,dashed] (pm2) -- (pn1);
\draw[message,dashed] (pn2) -- (po1);
\draw[message,dashed] (po2) -- (pp1);

\draw[message,shorten >=0px] (14,10) -- (14, 6);
\draw[message,shorten >=0px] (14,0) -- (14, 3.1);
\draw[message,shorten >=0px,-{}] (15, 0) -- (pa2);
\draw[message,shorten >=0px,-{}] (pa2) -- (15, 10);
\foreach \step in {0,...,10}{\draw[message] (15, \step) -- (15.5, \step);}

\end{tikzpicture}}
    \caption{An Example of the Execution of Binary Consensus (Algorithm~\ref{alg:binary}):\\
    Phase 1: Every player with input value $X=1$ sends a message to $p_i$, then crashes.\\
    Phase 2: $p_i$ is only able to send a message to $p_j \in C_2$ as it crashes. The same happens in each of the rounds of Phase~2, i.e., the player sending $1$ can only send its message to one member of the next committee. \\ 
    Phase 3: Assume that $f > n' - \sqrt{n'}$. All committees are now of size $f+1$. Player $p_m$, who received the message in the last round of Phase~2, tries to send a message to everyone in $C_h$ but crashes and the message is only received by $p_n$; in the remaining rounds of Phase~3, this process continues analogously.\\
    Phase 4: Every alive player with $Y=1$ try to send messages in $C_f$ in round $f$, they may crash before fully propagating 1 to $C_f$. $p_a\in C_f$ receives a message. In round $f+1$, $p_a$ try to broadcast $1$, and the adversary here can not crash $p_a$ without exceeding the threshold $f$. Consensus is reached. 
   \label{fig:binary-flow}
   }
\end{figure}

The main result of this section, proved hereafter, is the following.
\begin{theorem} 
\label{thm:binary}
Suppose that each player starts with a one-bit input value and that at most $f<n$ players crash. There is a deterministic algorithm that solves consensus in $f+1$ rounds, sends
$\mathcal{O}(nf)$ messages of constant size. 
Moreover, the algorithm achieves 
$\mathcal{O}(\lceil{f}/{\sqrt{n}}\rceil)$ energy complexity.
\end{theorem}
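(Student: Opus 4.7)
The plan is to dispatch the easy properties---validity, termination, time, energy, and message complexity---through direct accounting, and to reserve the bulk of the argument for agreement. Validity follows because a 1-valued message is sent only by a player with $X=1$ or after receiving a 1, so an all-zero input produces no 1-traffic (all decide 0), while any execution containing an input 1 has some player broadcast 1 in Phase~3. Termination and the $f+1$-round time bound are immediate from the pseudocode. For the energy bound, Phases~1 and 3 contribute $\mathcal{O}(1)$ awake rounds, and in Phase~2 each player is awake either at the wake-up round of one of its at-most $\lceil f/\sqrt{n}\rceil$ committees (guaranteed by JoinCommittees) or, following its single activation event, for the next $T_0 := \lceil(f+1)/\sqrt{n}\rceil$ sending rounds; these sum to $\mathcal{O}(\lceil f/\sqrt{n}\rceil)$. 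For messages, each awake-sending round contributes $\sqrt{n}$ messages per sender, giving $\mathcal{O}(nf)$ across Phases~1--2, to which the Phase~3 broadcast adds $\mathcal{O}(n^2)$, yielding $\mathcal{O}(n^2)$ total.

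The heart of the proof is agreement, which I would tackle by contradiction: assume alive players $q$ and $r$ decide $1$ and $0$ respectively. Then $X_r=0$, $r$ never receives a 1, and since $r$ is awake in round $f+1$, every alive player $p$ with $Y_p=1$ or $X_p=1$ at the start of round $f+1$ must crash during round $f+1$ before any message can reach $r$ (otherwise $p$'s Phase~3 broadcast delivers 1 to $r$). Writing $S$ for this set of would-be Phase~3 senders and $c'$ for the crashes used in rounds $1,\dots,f$, the crash budget forces $|S|+c'\leq f$; the whole task reduces to showing the reverse inequality $|S|+c'\geq f+1$.

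To get that lower bound, I would track $D_t$, the set of alive still-active 1-senders in round $t$, and observe that the signal is lost at round $t^*$ exactly when every player of $D_{t^*}$ crashes in round $t^*$. The decisive structural input is the JoinCommittees guarantee that any $\lceil\sqrt{n}\rceil+2$ consecutive committees are pairwise vertex-disjoint; combined with the $T_0$-round active window, this ensures that for every $t^*\geq T_0+1$, the set $D_{t^*}$ draws from the $T_0$ disjoint committees $C_{t^*-T_0},\dots,C_{t^*-1}$ (together with still-active $X=1$ players if $t^*\leq T_0+1$), accounting for at least $T_0\sqrt{n}\geq f+1$ members before crashes. A three-way case analysis then closes the argument: (i) if the chain dies at some $t^*\in[T_0+1,f]$, annihilating $D_{t^*}$ together with the crashes already spent within those committees costs at least $T_0\sqrt{n}\geq f+1$ crashes in total, violating the budget; (ii) if it dies at some $t^*\leq T_0$, then $D_{t^*}$ already contains every alive $X=1$ player and every alive 1-holding member of $C_1,\dots,C_{t^*-1}$, so its annihilation forces $|S|=0$, making $q$'s decision of $1$ impossible; (iii) if the chain survives through round $f$, every alive member of $\bigcup_{i=1}^f C_i$ enters Phase~3 with $Y=1$, and by JoinCommittees this union contains at least $f+1$ players, yielding $|S|+c'\geq f+1$.

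The main obstacle I expect is the bookkeeping around the active window: pinning down exactly which activated players remain in $D_{t^*}$ round by round, verifying the disjointness of $T_0$ consecutive committees (which uses $T_0\leq\lceil\sqrt{n}\rceil+2$, implied by $f<n$), and, in the $t^*\leq T_0$ case, aggregating both the $X=1$ relays and the early-committee relays so that $|S|=0$ follows cleanly. Once those invariants are set up, every case closes by the same crash-budget pigeonhole used above.
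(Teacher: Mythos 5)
Your treatment of validity, termination, and the energy and message bounds matches the paper's and is essentially fine. The agreement argument, however, has a concrete gap in the crash-budget accounting of your case (i). You claim that every member of the $T_0$ disjoint committees $C_{t^*-T_0},\dots,C_{t^*-1}$ is either already crashed or belongs to $D_{t^*}$, so that annihilating $D_{t^*}$ costs at least $T_0\lceil\sqrt{n}\rceil\ge f+1$ crashes in total. This overlooks a third possibility: a player $p\in C_j$ with $j\in[t^*-T_0,t^*-1]$ may already hold $Y=1$ when round $j$ arrives (because it was activated through an earlier committee membership $C_{j'}$ with $j'\le j-\lceil\sqrt{n}\rceil-2$, or through $X_p=1$), in which case the pseudocode does not reset its timer $T$; such a $p$ can be alive at round $t^*$ with an expired sending window, carrying $Y=1$ but lying outside $D_{t^*}$, so it contributes nothing to your crash count. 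The same phenomenon undermines case (ii), where you conclude $|S|=0$: these dormant $Y=1$ players survive into Phase~3 and belong to $S$. (A more minor imprecision: when all of $D_{t^*}$ crashes in round $t^*$, the model still delivers an arbitrary subset of their messages, so the signal need not die cleanly at $t^*$ and your "exactly when" dichotomy is not quite exact.)

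The missing ingredient is precisely the paper's key step (Lemmas~\ref{lem:no_twice} and~\ref{lem:finished}): if any player is activated and survives its entire $T_0$-round sending window --- which is exactly what creates the dormant players above --- then it has delivered $1$-valued messages to $T_0$ pairwise disjoint committees, hence to at least $T_0\lceil\sqrt{n}\rceil\ge f+1$ distinct players, each of whom is thereafter either crashed or a Phase~3 broadcaster; this already violates your budget $|S|+c'\le f$ and can be dispatched as a separate leading case, after which your remaining cases only need to consider senders that crash mid-window and your accounting goes through. The paper organizes the whole proof around this dichotomy in the forward direction: it tracks the set $P_r$ of players that have received a $1$-valued message and shows it gains at least one member per round unless every $1$-holder has crashed (agreement on $0$) or reaches size $f+1$ (agreement on $1$). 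Your backward crash-budget framing is viable and arguably more intuitive, but it cannot close without the ``no player is reached twice / a completed window reaches $f+1$ players'' argument that you currently omit.
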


\subsection{Notation and  Terminology}
Before proving Theorem~{\ref{thm:binary}}, we need to define some crucial terms for the  analysis of the algorithm.
Recall that 
$h = \min \{n'-\sqrt{n'}+1, f\}$.
If $f \le n'-\sqrt{n'}$, then Phase 3 is not executed, and
$h=f$
is the first round of the last phase of the execution (denoted Phase 4 in the algorithm, for rounds $f,f+1$). Otherwise, $h=n'-\sqrt{n'}+1$ is the round that starts Phase 3.\\
Consider some round $r$. We say that a \emph{player $p$ is triggered in round $r$}, if $p$ receives a 1-valued message in round $r$ that causes it to set its timer (and start broadcasting 1 as long as its timer is not 0), or $r \in \{1, h, f+1\}$ and $p$ wakes up in $r$ to send a 1-valued message. %
The next observation is immediate from the description of the algorithm.
\begin{observation} \label{obs:trigger_round}
Let $m$ be a specific 1-valued message sent by $p$. There exists a round $r$ such that one of the following is true:
\begin{itemize}
\item $m$ is sent as a consequence of $p$ being triggered in round $r$ by receiving a 1-valued message from some player $p'$, in some round following $r$,
or
\item $r\in \{1,h, f+1\}$ and $p$ wakes up during round $r$ by the algorithm's description and sends $m$ (since it has $Z=1$, $Y=1$ or $X=1$, depending on the round). 
\end{itemize}
\end{observation}
We refer to the round $r$ guaranteed by Observation~\ref{obs:trigger_round} as the \emph{trigger round of $m$}.
Note that if $r\in \{h, f+1\}$ and $p$ wakes up by the algorithm's description during $r$ (for having $Y=1$), then $p$ must have been triggered in some prior round $r'<r$, in which case the trigger round of $m$ is $r'$.
We say that $p$ was \emph{triggered by $p'$} to send $m$ in the former case, as well as in the latter case if $r\in \{h,f+1\}$.
We say that $p$ was \emph{triggered by the algorithm} before round $1$ to send $m$ in the latter case if $r=1$. For convenience, we define $C_{f+1} = V$ (i.e., the set of all players).

\begin{observation}
\label{obs:triggered_once}
The following hold:
\begin{enumerate}
\item[(1)] During each round $r\in [h+1, f+1]$, a player sends a 1-valued message only if it was triggered in the previous round, and that can happen at most once. 
\item[(2)] A player can be triggered to send messages during rounds $[1,h]$ only once.
\end{enumerate}
\end{observation}
\begin{proof} 
Part (1) follows since a player can only be triggered if $Z=0$, and when it is triggered it sets $Z=1$. Part~(2)  holds since a player can only be triggered if $Y=0$, upon which $Y$ is set to 1.
\end{proof}
\subsection{Proof of Theorem \ref{thm:binary}}
To prove the correctness of the algorithm, we need to show that it satisfies the validity, termination and agreement requirements. 
To start the proof, we make the following key observation on the structure of committees. Note that this observation is the reason we use the first $n'$ players for the committees of Phase 2 rather then all $n$ players.
\begin{observation} \label{obs:comm_diff}
$C_i = C_{i+\sqrt{n'}}$ for every 
$i\in [1,h-1-\sqrt{n'}]$.
\end{observation}
\begin{proof}
This follows from the round robin construction of the committees in Algorithm {\ref{alg:joincommittees}} and the fact that $\sqrt{n'}$ is an integer; i.e, we can form $\sqrt{n'}$ committees of size $\sqrt{n'}$. Then, committee number $\sqrt{n'} + 1$  will be the same as committee number 1 and so on.
\end{proof}

Next, we show that if enough players ($f+1$) receive a $1$-valued message prior to round $f+1$, then agreement on $1$ is guaranteed.

\begin{claim}
    \label{clm:non-faulty_received_1}
    If a non-faulty player receives a 1-valued message before round $f+1$, then all non-faulty players decide $1$ at the end of the execution.
\end{claim}
\begin{proof}
    Let $p$ be a non-faulty player that receives a 1-valued message for the first time during round $r\leq f$. If $r=f$, then $p\in C_r$ and during round $f$, $p$ sets $Y\gets 1$. Hence, in round $f+1$, $p$ broadcasts 1 to all players, resulting in every non-faulty processor deciding 1. Otherwise ($r<f$), $p$ awakes in round $f$ and sends 1 to every player in $C_f$. Since $|C_f|=f+1$, at least one recipient is non-faulty. Following the same reasoning of the previous case, the claim follows. 
\end{proof}
A direct implication of Claim {\ref{clm:non-faulty_received_1}} is the following corollary.
\begin{corollary}\label{cor:f+1_received}
If $f+1$ players receive a 1-valued message before round $f+1$, then all non-faulty players decide 1 at the end of the execution.
\end{corollary}
\begin{proof}
At least one of those $f+1$ players is non-faulty, and thus %
every non-faulty player decides 1 by Claim \ref{clm:non-faulty_received_1}.
\end{proof}

Since we defined a player to be non-faulty if it does not crash before deciding, it suffices to show validity, termination, and agreement for non-faulty players.

\begin{lemma} \label{lem:binary_validity_term}
Algorithm \ref{alg:multi} satisfies validity and terminates in $f+1$ rounds.
\end{lemma}
\begin{proof}
Every non-faulty player decides on $Y$ in round $f+1$, hence termination is guaranteed in $f+1$ rounds.

Next, we argue validity:
In the case where every player starts with $0$, no message is sent throughout the entire execution, causing all players who decide to decide on their input value.
Now consider the case where everyone starts with $1$. Since $n\geq f+1$, then by Corollary {\ref{cor:f+1_received}}, every non-faulty player decides 1, and validity follows. 
\end{proof}

We now focus on the agreement property:
The cases where every player starts with the same input value are already covered by Lemma \ref{lem:binary_validity_term}, and henceforth we can assume that there are players starting with $0$ as well as with $1$.

We now prove the main correctness claim, namely, that the agreement requirement is satisfied by the algorithm.
Assume towards contradiction that there exist an execution $EX$ and two non-faulty players $p^0,p^1$ that decide $0,1$ respectively in $EX$. By the algorithm's description, $p^1$ must have received a 1-valued message during round $f+1$ (not before, otherwise by Claim \ref{clm:non-faulty_received_1}, $p^0$ would decide 1). Denote by $\p_\ell$ 
the player that sent $p^1$ the 1 valued message during round $f+1$ (if there are more than one, then choose the one that was triggered to do so first, and if there are still more than one, choose one arbitrarily). Note that $\p_\ell$ must have been triggered (to send that message) by some player $\p_{\ell-1}$ during some prior round $r_{\ell-1}$ (the same rules for picking $\p_{\ell-1}$ apply). In general, $\p_{j+1}$ is either triggered by $\p_j$ during round $r_j$ or is triggered by the algorithm before round $1$.
We obtain a sequence of players $\p_1,\dots,\p_{\ell+1}$, where $\p_1$ is 
some player that has input 1 (and is triggered by the algorithm before round $1$), $\p_{\ell+1}=p^1$, and $r_\ell=f+1$. For ease of notation we will refer to $\p_j$ sending 1 to $\p_{j+1}$ as \emph{$\p_j$ triggering $\p_{j+1}$}.

\begin{observation}
\label{obs:phase_3_no_gaps}
    For every round $r\in [h+1, f+1]$, $r=r_j$ for some $j\in \ell$.
\end{observation}
\begin{proof}
    We prove the observation inductively on the round number $r$.

    \textbf{Base ($r=f+1$)}: $r=f+1=r_\ell$ by definition.

    \textbf{Step $(r<f+1)$}:
    By the induction hypothesis, we know that for $r+1 = r_j$ for some $j\le \ell$. Recall that $\p_{j}$ is triggered by $\p_{j-1}$ during round $r_{j-1}<r_j=r+1$ ($r_{j-1}\leq r$) and that $\p_j$ triggers $\p_{j+1}$ during round $r_j=r+1$. If $r>r_{j-1}$, then $\p_j$ does not crash during round $r$ and sends 1 to every player in $C_r$. By Corollary \ref{cor:f+1_received}, every non-faulty player decides 1, since $|C_r|\geq f+1$.
\end{proof}
\begin{observation}
\label{obs:phase2_no_3}
In execution $EX$, a player that is triggered in round $r<h$ cannot be triggered during round $r'\geq h$.
\end{observation}
\begin{proof}
Assume towards contradiction that $p$ is triggered during both rounds $r<h$ and $r'\geq h$.
Hence, $p$ does not crash during round $h$, and since it was triggered during round $r<h$, it has $Y=1$ during round $h$, which means it sends 1 to $C_h$.
If $h=f+1$, then $p^0$ receives and decides 1; otherwise, by Corollary \ref{cor:f+1_received} ($|C_h|=f+1$), it follows again that $p^0$ decides 1, which is a contradiction.
\end{proof}

\begin{claim} 
\label{clm:unique_players}
Every player in the sequence $\p_1, \dots, \p_\ell$ is unique.
\end{claim}
\begin{proof}
Assume towards a contradiction that $\p_j = \p_{j'}$ for some indices $j< j'$.
Consider the following cases.
If $h\leq r_j$ (Phase 3), then $\p_j=\p_{j'}$ sends a message during round $r_{j'}$, which means that it did not crash during round $r_j$ and sent 1 to every player in $C_{r_j+1}$. In Phase 3, the committee size is $|C_{r_j}| = f+1$.
Therefore, by Corollary \ref{cor:f+1_received}, all non-faulty players will decide 1, in contradiction.

If $r_j< h$ (Phase 1 or 2), then $\p_j$ is triggered during round $r_{j-1}$ and $\p_{j'}$ is triggered during round $r_{j'-1}$. Note that  $r_{j-1}<r_j\leq r_{j'-1}$. It cannot be that $r_{j'-1} \geq h$ because otherwise $\p_j=\p_{j'}$ is triggered in both $r_{j-1}< h$ and $r_{j'-1} \geq h$ violating Observation \ref{obs:phase2_no_3}. Also, it cannot be that  $r_{j'-1}< h$, because a player can be triggered only once during rounds $[1,h]$.
Hence, $r_{j'-1}$ cannot exist.
\end{proof}

\begin{claim}
\label{clm:crash_before_timer}
Every player that sets its timer $T>0$ during execution $EX$ must have crashed before it sets $T\gets 0$.
\end{claim}
\begin{proof}
Let $p$ be a player that sets its timer during round $r\leq f$. Consider two cases.

If $r\geq h$ and $h<f+1$ (Phase 3), then $p$ sets its timer to 1. In the following round, $r+1$, $p$ sends 1 to $C_{r+1}$. If $r+1\leq f$ then $|C_{r+1}|=f+1$ and  by Corollary \ref{cor:f+1_received}, every non-faulty player decides 1. If $r+1= f+1$, then $p\in C_f$, $|C_{r+1}|=n$ and every non-faulty player receives 1 and thus decides 1. In both cases all non-faulty players decide 1, contradiction.

If $r< h$ (Phase 1 or 2), then $p$ sets its timer to $\ceil{(f+1)/\sqrt{n'}}$. If $p$ does not crash before its timer reaches 0, then it sends 1 to the committees awake during the $\ceil{\frac{f+1}{\sqrt{n'}}}$ rounds following $r$.
Since the size of every such committee is at least $\sqrt{n'}$ and by Observation \ref{obs:comm_diff}, $p$ sends 1 to at least $\min \{f+1, n'\}$ different players. If $f\leq n'-\sqrt{n}$ or $p$ sends 1 during round $h$, then $p$ sends 1 to at least $f+1$ players, and by Corollary \ref{cor:f+1_received}, every non-faulty player decides 1, contradiction. 

The remaining, more difficult case is when $f>n'-\sqrt{n'}$ and $p$ does not send 1 during round $h$.
In this case, $p$ sends 1 to all of the first $n'<f+1$ players and $h= n'-\sqrt{n'}+1$. All of those players must crash before round $h+1$ (otherwise, every player in $C_h$ receives 1 and Corollary \ref{cor:f+1_received} applies).
Recall that for every round $r\in [h+1,f+1]$, $r=r_j$ for some $j\in[1,\ell]$ (Observation \ref{obs:phase_3_no_gaps}) and there is a unique player $\p_j$ that sends 1 during round $r$ (by Claim \ref{clm:unique_players}). The player $\p_j$ cannot have been triggered during any round $r'< h$ by Observation \ref{obs:phase2_no_3}, so $\p_j\notin \{p_1, \ldots,p_{n'}\}$ (as they were triggered during a round in $[1,h-1]$). If $\p_j$ is non-faulty, then its timer reaches 0 and every non-faulty player decides 1 (as explained above). Hence, the set of players $\{p_1, p_2, \dots, p_{n'}\} \cup \{\p_j \mid r_j \in [h+1,f+1]\}$ contains exactly $n' + f-h+1 = f +\sqrt{n'}\geq f+1$ unique players that must crash, contradiction.
\end{proof}

\begin{observation} \label{obs:inner_rounds}
    In every round $r$ such that $r_j<r<r_{j+1}$ for some $j\in [1,\ell]$, every player in the committee $C_r$ receives a 1-valued message.
\end{observation}
\begin{proof}
    By definition, during round $r_{j+1}$ the player $\p_{j+1}$ sends 1 to $\p_{j+2}$ and during round $r_j$, $\p_{j+1}$ is triggered by $\p_j$. Hence, in all rounds $r_j<r<r_{j+1}$, $\p_{j+1}$ does not crash (otherwise it will not be able to send during round $r_{j+1}$) and must have a non-zero timer, or in other words send a message, by Claim \ref{clm:crash_before_timer}. Hence, every players in $C_r$ receives 1 from $\p_{j+1}$.
\end{proof}

A \emph{failure mapping} for execution $EX$ is a mapping $\varphi : [0,f] \rightarrow V$ that maps every round $j\in [0,f]$ to a player $\varphi(j)$ such that 
\begin{enumerate}
    \item[(P1)] For $j\in[1,f]$, $\varphi(j) \in C_j$ and $\varphi(j)$ has crashed in  $EX$.
    \item[(P2)] $\varphi(0) = \p_1$.
    \item[(P3)] $\varphi$ is an injection, i.e., the players $\varphi(j)$ are distinct.
\end{enumerate}
Note that the existence of a failure mapping for execution $EX$ leads to contradiction, since the combination of these properties means that every round $j\in[0,f]$ is associated with a unique player that crashes in $EX$ (otherwise, $p^0$ decides 1), which is $f+1$ unique players, in contradiction to the number of faults being $f$.

We now describe a 
procedure $\Constr$ that, given an execution $EX$, attempts constructing a corresponding failure mapping $\varphi$ for $EX$. Procedure $\Constr(EX)$ operates as follows:

\begin{enumerate}
\item
Let $R=\{r_j\mid j\in [1, \ell-1]\}$ and $P =\{\p_j\mid j\in[1,\ell]\}$.
\item $\varphi(0) = \p_1$ (satisfies property (P2)).
\item For $r_j\in R$, $\varphi(r_j) = \p_{j+1}$.
\item For $r\in [1,\dots, f]\setminus R$, we define $\varphi(r) = p\in C_r$, where $p$ is chosen such that there is no $r'< r$ with $\varphi(r') = p$.
\end{enumerate}

Note that  a-priori, Procedure $\Constr(EX)$ might fail to produce a failure mapping for $EX$. This might happen if at some round $r \notin R$, there is no player in $C_r$ that is not already used in a previous round (since the players $\p_j$ for $j\in[1,\ell]$ are $m$ unique players by claim \ref{clm:unique_players}). Note that $r<h$.

\begin{claim}
If Procedure $\Constr(EX)$ is completed successfully, then Properties (P1), (P2), (P3) are satisfied.
\end{claim}
\begin{proof}
Property (P1) is satisfied by the construction since for every round $r_j\in R$, by definition, $\p_{j+1}\in C_{r_j}$  receives a 1-valued message, and, for every round $r\in [1,f]\setminus R$, every player $p\in C_r$ receives a 1-valued message, by Observation \ref{obs:inner_rounds}. 
    
Property (P2) is satisfied since $\varphi(0) = \p_1$ by construction.
    
Property (P3) is satisfied because of the following argument.
Let $r<r'$ be two rounds. If $r' \in \{1,\dots, f\}\setminus R$, then by construction $\varphi(r) \neq \varphi(r')$. Otherwise ($r'\in R$), if $r \in R$ then $\varphi(r) \neq \varphi(r')$ by Claim \ref{clm:unique_players}. If $r \in \{1,\dots, f\}\setminus R$ and $r' = r_j$ for some $j\in[1,\ell-1]$ then $\varphi(r')= p_{j+1}$. $p_{j+1} \notin C_r$ since (1) $r<h$ (2) $\p_{j+1}$ receives 1 and sets its timer during round $r_j$ (3) if $r_j \geq h$, then during round $h$, $\p_{j+1}$ sends 1 to $C_h$ in contradiction. Otherwise $r_j<h$, but a player can set its timer only once during round $[1,h-1]$, a contradiction. Hence, $p_{j+1} \neq \varphi(r)\in C_r$.
\end{proof}

To complete the proof, we argue that this does not happen, i.e., the procedure always succeeds (hence leading to contradiction as discussed earlier).

\begin{claim}
Procedure $\Constr(EX)$
is always successful.
\end{claim}
\begin{proof}
Assume towards contradiction that the construction is unsuccessful. Hence, there exists some round $r\notin R$, such that $r\leq \min \{f, n'-\sqrt{n'}\}=h-1$, where there is no player in $C_r$ that is not already used in a previous round.
Recall that $|C_r| = \sqrt{n'}$. For every $p\in C_r$ there exists $r' < r$ such that $\varphi(r')=p$. Let $p'_1, p'_2, \dots p'_{\sqrt{n'}} \in C_r$ be the players in $C_r$ in order of round labeling, and let $r(p'_j)<r$ be the round such that $\varphi(r(p'_j)) = p'_j$ for $j\in [1,\sqrt{n'}]$. It might be the case that $p'_1=\p_1$ in which case $r(p'_1) = 0$.
Hence,  $r(p'_2)\geq 1$ and by Observation \ref{obs:comm_diff}, $r(p'_{j+1}) \geq r(p'_{j}) + \sqrt{n'}$. Therefore, by induction we get that $r \geq r(p'_{\sqrt{n'}}) + \sqrt{n'} \geq 1 +(\sqrt{n'}-1)\sqrt{n'} = 1 + n'-\sqrt{n'} > \min \{f, n'-\sqrt{n'}\}=h-1$ in contradiction.
\end{proof} 

The last claim implies that the existence of execution $EX$ leads to the existence of a failure mapping for it, in contradiction with the assumption of at most $f$ failures. This yields the following.

\begin{lemma}
Algorithm~\ref{alg:binary} (Binary Consensus) achieves agreement.
\end{lemma}

\begin{lemma}\label{lem:binary_complexity}
Algorithm \ref{alg:binary} sends $\mathcal{O}(nf)$ messages of constant size and has an energy complexity of $\mathcal{O}(\lceil{f}/{\sqrt{n}}\rceil)$.
\end{lemma}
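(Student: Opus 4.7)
The plan is to analyze the two complexity measures by accounting for each of the three phases separately. For the energy bound, the two sources of awake rounds to track are (a) committee memberships and (b) active propagation rounds governed by the counter $T$.

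For the committee contribution, I would apply the specification of JoinCommittees directly: with parameters $a = f$ and $b = \lceil\sqrt{n}\rceil$, each player's ID appears in at most $\lceil f \lceil\sqrt{n}\rceil / n\rceil = \mathcal{O}(\lceil f/\sqrt{n}\rceil)$ committees, so a player wakes up at most that many times because of committee membership. For the propagation contribution, the key observation is that $T$ is only (re)set to $\lceil (f+1)/\sqrt{n}\rceil$ on the transition $Y:0\to 1$ (and once more in Phase~1 if the player's input is $X=1$), and monotonicity of $Y$ means this happens at most a constant number of times per player. Between these (re)sets, $T$ is decremented once per round of propagation, so each player is awake as an active propagator for at most $\mathcal{O}(\lceil (f+1)/\sqrt{n}\rceil)$ rounds. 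Adding the two rounds from Phase~1 and Phase~3 yields the claimed $\mathcal{O}(\lceil f/\sqrt{n}\rceil)$ energy bound.

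For the message bound, I would account for Phases~1, 2, and 3 separately. In Phase~1, at most $n$ players each send a $1$-valued message to the $\lceil\sqrt{n}\rceil$ members of $C_1$, contributing $\mathcal{O}(n\sqrt{n})$ messages. In Phase~3, every player that broadcasts does so to at most $n$ recipients, for an $\mathcal{O}(n^2)$ contribution. The crucial step is Phase~2: here I would charge messages per sender. By the energy argument above, each player is active (i.e., has $T>0$) for at most $\mathcal{O}(\lceil (f+1)/\sqrt{n}\rceil)$ Phase~2 rounds in total, and in each such round sends to exactly one committee of size $\lceil\sqrt{n}\rceil$, so each player sends $\mathcal{O}(f)$ Phase~2 messages. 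Summing over all $n$ players and using $f<n$ gives $\mathcal{O}(nf) = \mathcal{O}(n^2)$ messages in Phase~2. All messages carry a single bit, which is $\mathcal{O}(1)$ size.

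The only subtle point is ensuring that $T$ truly cannot be set to its maximum value more than $\mathcal{O}(1)$ times per player; I would verify this by inspecting the two assignments to $T$ in Phases~1 and~2 and observing that the Phase~2 assignment is guarded by $Y=0$, which is only ever true before the first update to $Y=1$. Once this is in hand, the remaining calculations are straightforward, and adding the contributions from the three phases gives the stated $\mathcal{O}(n^2)$ message complexity and $\mathcal{O}(\lceil f/\sqrt{n}\rceil)$ energy complexity.
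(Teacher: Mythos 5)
Your proposal is correct and follows essentially the same route as the paper: the energy bound is split into the committee-membership contribution ($\mathcal{O}(f\sqrt{n}/n)$ committees per player) plus the active-propagation contribution ($T$ reset at most a constant number of times, each burst lasting $\lceil (f+1)/\sqrt{n}\rceil$ rounds), and the message bound is obtained by the same per-phase, per-sender accounting with Phase~3 dominating at $\mathcal{O}(n^2)$. Your extra remark justifying why $T$ is reset only $\mathcal{O}(1)$ times via the $Y=0$ guard makes explicit what the paper states as ``this case can arise at most twice.''
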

\begin{proof} 
We start by proving the message complexity. Note that if $f\leq \sqrt{n}$, then we can simply use Algorithm~{\ref{alg:multi}}, which, in that case, already achieves awake complexity $O(1)$, round complexity $f+1$ and message complexity $\mathcal{O}(fn)$; note that the players know $f$ and $\sqrt{n}$ and hence can locally determine which algorithm to execute. Hence, in the remainder of the analysis, we can assume that $f> \sqrt{n}$.
\begin{description} 
    \item[(Phase 1):] Every player with input value $1$ may send a message to committee $C_1$. Therefore $\mathcal{O}(n\cdot \sqrt{n'}) =\mathcal{O}(nf)$ message might be sent since $|C_1|=\sqrt{n'}$.
    \item[(Phase 2):] 
    Every player $p$ can be triggered only once during rounds $[1,h]$ (Observation {\ref{obs:triggered_once}}), resulting in $p$ sending messages to the current round's committee for $\ceil{\frac{f+1}{\sqrt{n'}}}$ rounds, and the size of a committee in phase 2 is $\sqrt{n'}$. Hence,
    the message complexity of this phase is $\mathcal{O}(nf+n\sqrt{n}) = \mathcal{O}(nf)$.
    
    \item[(Phase 3):] In Phase 3, every player may send messages in round $r$ to all players in $C_{r+1}$ at most once. Since every committee in Phase 3 is of size $f+1$, this accounts for at most $\mathcal{O}(nf)$ messages.
    \item[(Phase 4):] In round $f$, every player may send a message to every player in $C_f$. In the final round, up to $f+1$ players may broadcast, which again yields $\mathcal{O}(nf)$ messages.
\end{description}

\noindent Next, we analyze the energy complexity:
\begin{description}
    \item[(Phase 1 and 4):] The first and last phase consist of a constant number of rounds, which amounts to  $\mathcal{O}(1)$ awake rounds per node. 
    \item[(Phase 2):] Recall that a player is awake in round $r$ if its timer is not 0 or it belongs to $C_r$. During the rounds $1\leq r< h$, the timer of a player $p$ can be set to $\ceil{(f+1)/\sqrt{n'}} = \mathcal{O}(f/\sqrt{n})$ only once. Therefore, every player is awake in at most 
    $$O\lt(\frac{\min \{f, n'-\sqrt{n'}\} \cdot\sqrt{n'}}{n'}+\lt\lceil\frac{f+1}{\sqrt{n'}}\rt\rceil \rt)=\mathcal{O}(f/\sqrt{n})$$
    rounds.
    \item[(Phase 3):] During rounds $r\geq h$, the timer of a player $p$ can be set to 1  once, and it belongs to. 
    \[O\lt(\frac{\max \{0, f-(n'-\sqrt{n'})\} \cdot (f+1)}{n}\rt)=\mathcal{O}(f\sqrt{n}/n)=\mathcal{O}(f/\sqrt{n})\]
    committees, where the first equality follows since
    \[
    f-(n'-\sqrt{n'}) \leq n - \sqrt{n'} (\sqrt{n'}-1)\leq n-(\sqrt{n}-1)(\sqrt{n}-2)=3\sqrt{n}-2
    \]
\end{description}
Hence, the awake complexity is $\mathcal{O}(\ceil{f/\sqrt{n}})$.
\end{proof}

\section{Conclusion}

In this work, we have initiated the study of fault-tolerant and energy-efficient algorithms in synchronous networks, by showing that the $f+1$ barrier on the time complexity of deterministic consensus algorithms does not apply to energy complexity.
An intriguing question raised by our work is how close the energy complexity bound of $\Theta( f/\sqrt{n}  )$ is to the lower bound and, more specifically, whether the energy complexity must depend on the number of faults $f$.
Note that it is possible to solve certain graph problems such as vertex coloring on a ring in just $O(1)$ energy~\cite{DBLP:conf/podc/BalliuFOR25}, even though they are known to have a super-constant time complexity~\cite{linial1992locality}, which means that we cannot rule out an $O(1)$-energy algorithm for consensus. 

\bibliographystyle{plainurl}
\bibliography{refs}

\begin{thebibliography}{10}

\bibitem{attiyawelch}
Hagit Attiya and Jennifer Welch.
\newblock {\em Distributed Computing: Fundamentals, Simulations and Advanced
  Topics (2nd ed.)}.
\newblock John Wiley \& Sons, Inc., April 2004.

\bibitem{DBLP:conf/podc/BalliuFOR25}
Alkida Balliu, Pierre Fraigniaud, Dennis Olivetti, and Mika{\"{e}}l Rabie.
\newblock Solving sequential greedy problems distributedly with sub-logarithmic
  energy cost.
\newblock In Alkida Balliu and Fabian Kuhn, editors, {\em Proceedings of the
  {ACM} Symposium on Principles of Distributed Computing, {PODC} 2025, Hotel
  Las Brisas Huatulco, Huatulco, Mexico, June 16-20, 2025}, pages 417--427.
  {ACM}, 2025.
\newblock \href {https://doi.org/10.1145/3732772.3733499}
  {\path{doi:10.1145/3732772.3733499}}.

\bibitem{DBLP:conf/wdag/BarenboimM21}
Leonid Barenboim and Tzalik Maimon.
\newblock Deterministic logarithmic completeness in the distributed sleeping
  model.
\newblock In Seth Gilbert, editor, {\em {DISC} 2021,}, volume 209 of {\em
  LIPIcs}, pages 10:1--10:19. LIPICS, 2021.
\newblock URL: \url{https://doi.org/10.4230/LIPIcs.DISC.2021.10}, \href
  {https://doi.org/10.4230/LIPICS.DISC.2021.10}
  {\path{doi:10.4230/LIPICS.DISC.2021.10}}.

\bibitem{DBLP:conf/podc/ChangDHHLP18}
Yi{-}Jun Chang, Varsha Dani, Thomas~P. Hayes, Qizheng He, Wenzheng Li, and Seth
  Pettie.
\newblock The energy complexity of broadcast.
\newblock In Calvin Newport and Idit Keidar, editors, {\em {PODC} 2018}, pages
  95--104. {ACM}, 2018.
\newblock \href {https://doi.org/10.1145/3212734.3212774}
  {\path{doi:10.1145/3212734.3212774}}.

\bibitem{DBLP:conf/podc/ChangDHP20}
Yi{-}Jun Chang, Varsha Dani, Thomas~P. Hayes, and Seth Pettie.
\newblock The energy complexity of {BFS} in radio networks.
\newblock In Yuval Emek and Christian Cachin, editors, {\em {PODC} 2020}, pages
  273--282. {ACM}, 2020.
\newblock \href {https://doi.org/10.1145/3382734.3405713}
  {\path{doi:10.1145/3382734.3405713}}.

\bibitem{chatterjee2020sleeping}
Soumyottam Chatterjee, Robert Gmyr, and Gopal Pandurangan.
\newblock Sleeping is efficient: Mis in o (1)-rounds node-averaged awake
  complexity.
\newblock In {\em ACM PODC 2020}, pages 99--108, 2020.

\bibitem{d2024asynchrony}
Francesco D'Amato, Giuliano Losa, and Luca Zanolini.
\newblock Asynchrony-resilient sleepy total-order broadcast protocols.
\newblock In {\em Proceedings of the 43rd ACM Symposium on Principles of
  Distributed Computing}, pages 247--256, 2024.

\bibitem{d2024recent}
Francesco D'Amato and Luca Zanolini.
\newblock Recent latest message driven ghost: Balancing dynamic availability
  with asynchrony resilience.
\newblock In {\em 2024 IEEE 37th Computer Security Foundations Symposium
  (CSF)}, pages 127--142. IEEE, 2024.

\bibitem{dolevstrong}
Danny Dolev and H.~Raymond Strong.
\newblock Authenticated algorithms for byzantine agreement.
\newblock {\em {SIAM} J. Comput.}, 12(4):656--666, 1983.
\newblock URL: \url{http://dx.doi.org/10.1137/0212045}, \href
  {https://doi.org/10.1137/0212045} {\path{doi:10.1137/0212045}}.

\bibitem{DBLP:conf/podc/DufoulonMP23}
Fabien Dufoulon, William K.~Moses Jr., and Gopal Pandurangan.
\newblock Distributed {MIS} in o(log log n) awake complexity.
\newblock In {\em {PODC} 2023, Orlando, FL, USA, June 19-23, 2023}, pages
  135--145. {ACM}, 2023.
\newblock \href {https://doi.org/10.1145/3583668.3594574}
  {\path{doi:10.1145/3583668.3594574}}.

\bibitem{d2025tob}
Francesco D’Amato, Roberto Saltini, Thanh-Hai Tran, and Luca Zanolini.
\newblock Tob-svd: Total-order broadcast with single-vote decisions in the
  sleepy model.
\newblock In {\em 2025 IEEE 45th International Conference on Distributed
  Computing Systems (ICDCS)}, pages 1033--1043. IEEE, 2025.

\bibitem{efron2025fully}
Yuval Efron, Joachim Neu, and Toniann Pitassi.
\newblock Fully-fluctuating participation in sleepy consensus.
\newblock In {\em 7th Conference on Advances in Financial Technologies (AFT
  2025)}, pages 17--1. Schloss Dagstuhl--Leibniz-Zentrum f{\"u}r Informatik,
  2025.

\bibitem{efron2025optimal}
Yuval Efron, Joachim Neu, Ling Ren, and Ertem~Nusret Tas.
\newblock Optimal good-case latency for sleepy consensus.
\newblock {\em arXiv preprint arXiv:2510.06023}, 2025.

\bibitem{FLP85}
Michael~J. Fischer, Nancy~A. Lynch, and M.~S. Paterson.
\newblock Impossibility of distributed consensus with one faulty process.
\newblock {\em Journal of the ACM}, 32(2):374--382, April 1985.

\bibitem{DBLP:conf/podc/0001P23}
Mohsen Ghaffari and Julian Portmann.
\newblock Distributed {MIS} with low energy and time complexities.
\newblock In {\em {PODC} 2023, Orlando, FL, USA, June 19-23, 2023}, pages
  146--156. {ACM}, 2023.
\newblock \href {https://doi.org/10.1145/3583668.3594587}
  {\path{doi:10.1145/3583668.3594587}}.

\bibitem{DBLP:conf/podc/0001T24}
Mohsen Ghaffari and Anton Trygub.
\newblock A near-optimal low-energy deterministic distributed {SSSP} with
  ramifications on congestion and {APSP}.
\newblock In Ran Gelles, Dennis Olivetti, and Petr Kuznetsov, editors, {\em
  {PODC} 2024, Nantes, France, June 17-21, 2024}, pages 401--411. {ACM}, 2024.
\newblock \href {https://doi.org/10.1145/3662158.3662812}
  {\path{doi:10.1145/3662158.3662812}}.

\bibitem{DBLP:conf/icdcs/HouraniPR22}
Khalid Hourani, Gopal Pandurangan, and Peter Robinson.
\newblock Awake-efficient distributed algorithms for maximal independent set.
\newblock In {\em 42nd {IEEE} International Conference on Distributed Computing
  Systems, {ICDCS} 2022, Bologna, Italy, July 10-13, 2022}, pages 1338--1339.
  {IEEE}, 2022.
\newblock \href {https://doi.org/10.1109/ICDCS54860.2022.00153}
  {\path{doi:10.1109/ICDCS54860.2022.00153}}.

\bibitem{lamport1982byzantine}
Leslie Lamport, Robert Shostak, and Marshall Pease.
\newblock The byzantine generals problem.
\newblock {\em ACM Transactions on Programming Languages and Systems},
  4(3):382--401, 1982.

\bibitem{linial1992locality}
Nathan Linial.
\newblock Locality in distributed graph algorithms.
\newblock {\em SIAM Journal on computing}, 21(1):193--201, 1992.

\bibitem{losa2023consensus}
Giuliano Losa and Eli Gafni.
\newblock Consensus in the unknown-participation message-adversary model.
\newblock {\em arXiv preprint arXiv:2301.04817}, 2023.

\bibitem{Lyn96}
Nancy Lynch.
\newblock {\em Distributed Algorithms}.
\newblock Morgan Kaufman Publishers, Inc., San Francisco, USA, 1996.

\bibitem{malkhi2023towards}
Dahlia Malkhi, Atsuki Momose, and Ling Ren.
\newblock Towards practical sleepy bft.
\newblock In {\em Proceedings of the 2023 ACM SIGSAC Conference on Computer and
  Communications Security}, pages 490--503, 2023.

\bibitem{10.1145/3732772.3733554}
Hugo Mirault and Peter Robinson.
\newblock Brief announcement: Towards energy-efficient distributed agreement.
\newblock In {\em Proceedings of the ACM Symposium on Principles of Distributed
  Computing}, PODC '25, page 428–431, New York, NY, USA, 2025. Association
  for Computing Machinery.
\newblock \href {https://doi.org/10.1145/3732772.3733554}
  {\path{doi:10.1145/3732772.3733554}}.

\bibitem{momose2022constant}
Atsuki Momose and Ling Ren.
\newblock Constant latency in sleepy consensus.
\newblock In {\em Proceedings of the 2022 ACM SIGSAC Conference on Computer and
  Communications Security}, pages 2295--2308, 2022.

\bibitem{neu2025limits}
Joachim Neu, Javier Nieto, and Ling Ren.
\newblock On the limits of consensus under dynamic availability and
  reconfiguration.
\newblock {\em arXiv preprint arXiv:2510.03625}, 2025.

\bibitem{pass2017sleepy}
Rafael Pass and Elaine Shi.
\newblock The sleepy model of consensus.
\newblock In {\em International conference on the theory and application of
  cryptology and information security}, pages 380--409. Springer, 2017.

\bibitem{pease1980reaching}
Marshall Pease, Robert Shostak, and Leslie Lamport.
\newblock Reaching agreement in the presence of faults.
\newblock {\em Journal of the ACM (JACM)}, 27(2):228--234, 1980.

\bibitem{DBLP:journals/csur/00310J23}
Jie Xu, Cong Wang, and Xiaohua Jia.
\newblock A survey of blockchain consensus protocols.
\newblock {\em {ACM} Comput. Surv.}, 55(13s):278:1--278:35, 2023.
\newblock \href {https://doi.org/10.1145/3579845} {\path{doi:10.1145/3579845}}.

\end{thebibliography}

\end{document}